
\documentclass[UKenglish,runningheads,envcountsect,envcountsame]{llncs}
\sloppy



%
%
\usepackage{etex,etoolbox}
\usepackage{marvosym}

\makeatletter

\makeatother

\usepackage{ifdraft}
\ifdraft{
\usepackage[nohead,nofoot,
            headsep = 1cm,
            footskip = 1cm,
            text={12.2cm,19.3cm},
            paperwidth=16.2cm,
            paperheight=23.3cm,
            marginparsep=1mm,
            marginparwidth=1.88cm,
            footskip=1cm,
            centering
            ]{geometry}
}{}



\makeatletter
\RequirePackage[bookmarks,unicode,colorlinks=true]{hyperref}%
   \def\@citecolor{black}%
   \def\@urlcolor{blue}%
   \def\@linkcolor{black}%

\def\orcidID#1{\smash{\href{http://orcid.org/#1}{\protect\raisebox{-1.25pt}{\protect\includegraphics{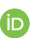}}}}}
\makeatother

\usepackage[notref,notcite,final]{showkeys}
\usepackage{cite}

\usepackage{seqsplit}
\usepackage{xstring}
\newcommand{\defaultshowkeysformat}[1]{%
\StrSubstitute{#1}{ }{\textvisiblespace}[\TEMP]%
\parbox[t]{\marginparwidth}{\raggedright\normalfont\small\ttfamily\(\{\){\color{red!50!black}\expandafter\seqsplit\expandafter{\TEMP}}\(\}\)}%
}

\renewcommand*\showkeyslabelformat[1]{%
\noexpandarg%
\defaultshowkeysformat{#1}%
}


\setlength{\itemsep}{0cm}
\newcounter{blubber}

\makeatletter
\def\moverlay{\mathpalette\mov@rlay}
\def\mov@rlay#1#2{\leavevmode\vtop{%
   \baselineskip\z@skip \lineskiplimit-\maxdimen
   \ialign{\hfil$\m@th#1##$\hfil\cr#2\crcr}}}
\newcommand{\charfusion}[3][\mathord]{
    #1{\ifx#1\mathop\vphantom{#2}\fi
        \mathpalette\mov@rlay{#2\cr#3}
      }
    \ifx#1\mathop\expandafter\displaylimits\fi}
\makeatother

\newcommand{\size}{\mathsf{size}}

\newcommand{\target}{\chi}

\newcommand{\Cls}{\mathcal}
\newcommand{\CM}{{\Cls M}}
\newcommand{\FA}{{\mathfrak A}}

\newcommand{\Dist}{\mathcal{D}}
\newcommand{\Rat}{{\mathbb{Q}}}
\newcommand{\Nat}{{\mathbb{N}}}

\newcommand{\Bag}{\mathcal{B}}
\newcommand{\Gm}{\mathcal{G}}

\newcommand{\Set}{\mathsf{Set}}

\newcommand{\Sem}[1]{{[\![#1]\!]}}
\newcommand{\hearts}{\heartsuit}

\newcommand{\sem}[1]{[\![#1]\!]}

\newcommand{\Pow}{\mathcal{P}}

\usepackage{booktabs}   

\usepackage{xcolor}
\usepackage{soul}
\usepackage[utf8]{inputenc}
\usepackage{amsmath,amssymb}
\usepackage{stmaryrd} 
\usepackage{todonotes}
\usepackage{xspace}

\renewcommand{\Box}{\square}
\renewcommand{\Diamond}{\lozenge}

\newcommand{\prios}{2nk}

\newcommand{\detcarrier}{D_\target}
\newcommand{\detprio}{\beta}

\newcommand{\parti}{\rightharpoonup}

\DeclareMathOperator{\LFP}{\mathsf{LFP}}
\DeclareMathOperator{\GFP}{\mathsf{GFP}}





\newcommand\NP{$\textsc{NP}$}
\newcommand\coNP{$\textsc{coNP}$\xspace}
\newcommand\UP{$\textsc{UP}$}
\newcommand\coUP{$\textsc{co-UP}$\xspace}

%
%
\usepackage[inline]{enumitem}
\setlist[enumerate,1]{label=(\arabic*),font=\normalfont,align=left,leftmargin=0pt,labelindent=0pt,listparindent=\parindent,labelwidth=0pt,itemindent=!,topsep=3pt,parsep=0pt,itemsep=3pt,start=1}
\setlist[enumerate,2]{label=(\alph*),font=\normalfont,labelindent=*,leftmargin=*,start=1}
\setlist[itemize]{labelindent=*,leftmargin=*,topsep=5pt,itemsep=3pt}
\setlist[description]{labelindent=*,leftmargin=*,itemindent=-1 em}

\usepackage{xspace}
\usepackage{xcolor}
\usepackage{soul}
\usepackage[utf8]{inputenc}
\usepackage{amsmath,amssymb}
\usepackage{stmaryrd} 

\numberwithin{equation}{section}

\usepackage[author=anonymous,marginclue,footnote,draft]{fixme}
\FXRegisterAuthor{dh}{adh}{DH}
\FXRegisterAuthor{ls}{als}{LS}
\FXRegisterAuthor{sm}{asm}{SM}

\usepackage{algorithmicx}
    \usepackage[ruled]{algorithm}
\usepackage[noend]{algpseudocode}

\usepackage{tikz}
 \usetikzlibrary{trees}
 \usetikzlibrary{shapes}
 \usetikzlibrary{fit}
 \usetikzlibrary{shadows}
 \usetikzlibrary{backgrounds}
 \usetikzlibrary{arrows,automata}

\tikzset{
   n/.style= {circle,fill,inner sep=1.5pt,node distance=2cm}
  ,acc/.style={circle,draw,inner sep=3pt,node distance=2cm}
  ,phantom/.style={circle},
  ,arr/.style={->, >=stealth, semithick, shorten <= 3pt, shorten >= 3pt}
}

%
%
  %
  %


\renewcommand{\Box}{\square}
\renewcommand{\Diamond}{\lozenge}

%
%
\newcommand{\takeout}[1]{\empty}

\makeatletter
\newcommand\mysubsec{\@startsection{paragraph}{4}{\z@}%
  {-6\p@ \@plus -4\p@ \@minus -4\p@}%
  {-0.5em \@plus -0.22em \@minus -0.1em}%
  {\normalfont\normalsize\bfseries}}
\makeatother


%
%

%

\spnewtheorem{assumptions}[theorem]{Assumptions}{\bfseries}{\rmfamily}
\spnewtheorem{notation}[theorem]{Notation}{\bfseries}{\rmfamily}
\spnewtheorem{observation}[theorem]{Observation}{\bfseries}{\rmfamily}
\spnewtheorem{defn}[theorem]{Definition}{\bfseries}{\rmfamily}
\spnewtheorem{expl}[theorem]{Example}{\bfseries}{\rmfamily}
\spnewtheorem{rem}[theorem]{Remark}{\bfseries}{\rmfamily}
\spnewtheorem{fact}[theorem]{Fact}{\bfseries}{\rmfamily}
\spnewtheorem{construction}[theorem]{Construction}{\bfseries}{\rmfamily}
\spnewtheorem{examples}[theorem]{Examples}{\bfseries}{\rmfamily}

\begin{document}

\title{Quasipolynomial Computation \\of Nested Fixpoints}
\author{%
  Daniel Hausmann (\Letter)\orcidID{0000-0002-0935-8602}
  \and
  Lutz Schröder (\Letter)\orcidID{0000-0002-3146-5906}~\thanks{Work forms part of the DFG-funded project CoMoC (SCHR 1118/15-1, MI 717/7-1).}
}
\authorrunning{D.~Hausmann, L.~Schröder}

\institute{Friedrich-Alexander-Universität
  Erlangen-Nürnberg, Erlangen, Germany\\
  \email{\{daniel.hausmann,lutz.schroeder\}@fau.de}%
}

\maketitle
\begin{abstract}
  It is well-known that the winning region of a parity game with $n$
  nodes and $k$ priorities can
  be 
  computed as a $k$-nested fixpoint of a suitable function;
  straightforward computation of this nested fixpoint requires
  $\mathcal{O}(n^{\frac{k}{2}})$ iterations of the function.  Calude
  et al.'s recent quasipolynomial-time parity game solving algorithm
  essentially shows how to compute the same fixpoint in only
  quasipolynomially many iterations by reducing parity games
  to quasipolynomially sized safety games. Universal graphs have been
  used to modularize this transformation of parity games to equivalent
  safety games that are obtained by combining the original
  game with a universal graph. 
   We show that this approach
  naturally generalizes to the computation of solutions of systems of
  \emph{any} fixpoint equations over finite lattices; hence,
  the solution of fixpoint equation systems can be computed 
  by quasipolynomially many iterations of the equations.
  We present applications to
  modal fixpoint logics and games beyond relational semantics. 
  For instance, the
  model checking problems for the energy $\mu$-calculus, finite latticed
  $\mu$-calculi, and
  the graded and the (two-valued)
  probabilistic $\mu$-calculus -- with numbers coded in binary -- can
  be solved via nested fixpoints of functions that differ
  substantially from the function for parity games but still can be
  computed in quasipolynomial time; our result hence implies that
  model checking for these $\mu$-calculi is in
  $\textsc{QP}$. Moreover, we improve the exponent in known
  exponential bounds on satisfiability
  checking. 
\end{abstract}
\keywords{Fixpoint theory, model checking, satisfiability checking,
parity games, energy games, $\mu$-calculus}  
\medskip

\section{Introduction}

Fixpoints are pervasive in computer science, governing large portions
of recursion theory, concurrency theory, logic, and game theory. One
famous example are parity games, which are central, e.g., to networks
and infinite processes~\cite{BodlaenderEA01}, tree
automata~\cite{Zielonka98}, and $\mu$-calculus model
checking~\cite{EmersonEA01}. Winning regions in parity games can be
expressed as nested fixpoints of particular set functions
(e.g.~\cite{DawarGraedel08,BruseEA14}). In recent breakthrough work on
the solution of parity games in quasipolynomial time, Calude et
al.~\cite{CaludeEA17} essentially show how to compute this particular
fixpoint in quasipolynomial time, that is, in time
$2^{\mathcal{O}({(\log n)^c})}$ for some constant $c$. 
Subsequently, it has been 
shown~\cite{JurdzinskiLazic17,CzerwinskiEA19,ColcombetFijalkow19} 
that universal graphs (that is, even graphs into which every even 
graph of a certain size embeds by a graph morphism)
can be used to transform parity games to equivalent safety
games obtained by pairing the original game with a universal
graph; the size of these safety games is determined by the size of
the employed universal graphs and it has been shown~\cite{CzerwinskiEA19,ColcombetFijalkow19} that there
are universal graphs of quasipolynomial size. This yields a uniform
algorithm for solving parity games to which all currently known 
quasipolynomial algorithms for parity games have been shown to
instantiate using appropriately defined universal graphs~\cite{CzerwinskiEA19,ColcombetFijalkow19}.

Briefly, our contribution in the
present work is to show that the method of using universal graphs
to solve parity games generalizes to the computation of nested
fixpoints of arbitrary functions over finite lattices. 
That is, given functions
$f_i:L^{k+1}\to L$, $0\leq i \leq k$ on a finite lattice~$L$,
we give an algorithm that uses
universal graphs to compute the solutions of systems of equations
\begin{equation*}
  X_i=_{\eta_i} f_i(X_0,\ldots,X_k) \qquad\qquad 0\leq i\leq k
\end{equation*}
where $\eta_i=\GFP$ (greatest fixpoint) or
$\eta_i=\LFP$ (least fixpoint). Since there are universal graphs
of quasipolynomial size, the
algorithm requires only quasipolynomially many iterations of the functions $f_i$ and hence runs in quasipolynomial time,
provided that all~$f_i$ are computable in quasipolynomial time.
While it seems plausible that this time bound
may also be obtained by translating  equation systems
to equivalent standard 
parity games by emulating Turing machines to encode the functions $f_i$ as Boolean circuits (leading to many additional states but avoiding exponential blowup during the process),
we emphasize that the main point of our result is not so much
the ensuing time bound but rather the insight that universal graphs
and hence many algorithms for
parity games can be used on a much more general level which yields
  a precise (and relatively low) quasipolynomial bound on the number of function 
  calls that are required to obtain solutions of fixpoint equation systems.

In more detail, the method of Calude et al.\ can be described as
annotating nodes of a parity game with histories of quasipolynomial
size and then solving this annotated game, but with a safety winning
condition instead of the much more involved parity winning
condition. It has been shown that these histories can be seen as nodes
in universal graphs, in a more general reduction of parity games to
safety games in which nodes from the parity game are annotated with
nodes from a universal graph.  This method has also been described as
pairing \emph{separating automata} with safety
games~\cite{CzerwinskiEA19}.  It has been shown~\cite{CzerwinskiEA19,ColcombetFijalkow19}
that there are exponentially sized universal graphs (essentially
yielding the basis for e.g. the
fixpoint iteration algorithm~\cite{BruseEA14} or the small progress
measures algorithm~\cite{Jurdzinski00}) and quasipolynomially sized
universal graphs (corresponding, e.g., to the succinct progress
measure algorithm~\cite{JurdzinskiLazic17}, or to the recent
quasipolynomial variant of Zielonka's algorithm~\cite{Parys19}).

Hasuo et al.~\cite{HasuoEA16}, and more generally, Baldan et
al.~\cite{BaldanEA19} show that nested fixpoints in highly general
settings can be computed by a technique based on progress measures,
implicitly using exponentially sized universal graphs, obtaining an
exponential bound on the number of iterations. Our technique is based
on showing that one can make explicit use of universal graphs,
correspondingly obtaining a quasipolynomial upper bound on the number
of iterations. In both cases, computation of the nested fixpoint is
reduced to a single (least or greatest depending on exact formulation)
fixpoint of a function that extends the given set function to keep
track of the exponential and quasipolynomial histories, respectively,
in analogy to the previous reduction of parity games to safety games.
Our central result can then be phrased as saying that the method of
transforming parity conditions to safety conditions using universal
graphs generalizes from solving parity games to solving systems of
equations that use arbitrary functions over finite lattices.  We use
\emph{fixpoint games}~\cite{Venema08,BaldanEA19} to obtain the crucial
result that the solutions of equation systems have history-free
witnesses, in analogy to history-freeness of winning strategies in
parity games. These fixpoint games have exponential size but we show
how to extract \emph{polynomial-size} witnesses for winning strategies
of $\mathsf{Eloise}$, and use these witnesses to show that any node
won by $\mathsf{Eloise}$ is also won in the safety game obtained by a
universal graph.  For the backwards direction, we show that a witness
for satisfaction of the safety condition regarding the universal graph
induces a winning strategy in the fixpoint game. This proves that
universal graphs can be used to compute nested fixpoints of arbitrary
functions over finite lattices and hence yields the quasipolynomial
upper bound for computation of nested fixpoints.  Moreover, we present
a progress measure algorithm that uses the nodes of a quasipolynomial
universal graph to measure progress and that can be used to
efficiently compute nested fixpoints of arbitrary functions over
finite lattices.

As an immediate application of these results, we improve known
deterministic algorithms for solving \emph{energy parity games}
\cite{ChatterjeeDoyen12}, that is,
parity games in
which edges have additional integer weights and for which
the winning condition is a combined parity condition and
a (quantitative) positivity condition on the 
sum of the accumulated weights. Our results also show that
the model checking problem for the associated
\emph{energy $\mu$-calculus}\cite{AmramEA20}
is in $\textsc{QP}$. In a similar fashion, we obtain quasipolynomial algorithms
for model checking in
latticed $\mu$-calculi\cite{BrunsGodefroid04} in which the
truth values of formulae are computed over arbitrary finite lattices,
and for solving associated latticed parity games~\cite{KupfermanLustig07}.

Furthermore, our results improve generic upper
complexity bounds on model checking and satisfiability checking in the
\emph{coalgebraic $\mu$-calculus}~\cite{CirsteaEA11a}, which serves as
a generic framework for fixpoint logics beyond relational
semantics. Well-known instances of the coalgebraic $\mu$-calculus
include the alternating-time $\mu$-calculus~\cite{AlurEA02}, the
graded $\mu$-calculus~\cite{KupfermanEA02}, the (two-valued)
probabilistic $\mu$-calculus~\cite{CirsteaEA11a,LiuEA15}, and the
monotone $\mu$-calculus~\cite{EnqvistEA15} (the ambient fixpoint logic
of concurrent dynamic logic CPDL~\cite{Peleg87} and Parikh's game
logic~\cite{Parikh85}). This level of generality is achieved by
abstracting system types as set functors and systems as coalgebras
for the given functor following the paradigm of universal
coalgebra~\cite{Rutten00}.
It was previously shown~\cite{HausmannSchroder19b} that the model
checking problem for coalgebraic $\mu$-calculi reduces to the
computation of a nested fixpoint. This fixpoint may be seen as a
coalgebraic generalization of a parity game winning region but can be
literally phrased in terms of small standard parity games (implying
quasipolynomial run time) only in restricted cases. Our results show
that the relevant nested fixpoint can be computed in quasipolynomial
time in all cases of interest. Notably, we thus obtain as new specific
upper bounds that even under binary coding of numbers, the model
checking problems of both the graded $\mu$-calculus and the
probabilistic $\mu$-calculus are in $\textsc{QP}$, even when the syntax is
extended to allow for (monotone) polynomial inequalities.

Similarly, the satisfiability problem of the coalgebraic
$\mu$-calculus has been reduced to a computation of a nested
fixpoint~\cite{HausmannSchroder19a}, and our present results imply a
marked improvement in the exponent of the associated exponential time
bound.  Specifically, the nesting depth of the relevant fixpoint is
exponentially smaller than the basis of the lattice. Our results imply that this fixpoint is computable in
polynomial time so that the complexity of satisfiability checking in
coalgebraic $\mu$-calculi 
drops from $2^{\mathcal{O}({n^2k^2\log n})}$ to
$2^{\mathcal{O}({nk\log n})}$ for formulae of size $n$ and with
alternation depth $k$.

\paragraph{Related Work} The quasipolynomial bound on parity game
solving has in the meantime been realized by a number of alternative
algorithms. For instance, Jurdzinski and
Lazic~\cite{JurdzinskiLazic17} use succinct progress measures to
improve to quasilinear (instead of quasipolynomial) space; Fearnley et
al.~\cite{FearnleyEA19} similarly achieve quasilinear space.
Lehtinen~\cite{Lehtinen18} and Boker and
Lehtinen~\cite{BokerLehtinen18} present a quasipolynomial algorithm
using register games. Parys~\cite{Parys19} improves Zielonka's
algorithm~\cite{Zielonka98} to run in quasipolynomial time. In
particular the last algorithm is of interest as an additional
candidate for generalization to nested fixpoints, due to the known
good performance of Zielonka's algorithm in practice.  Daviaud et
al.~\cite{DaviaudEA18} generalize quasipolynomial-time parity game
solving by providing a pseudo-quasipolynomial algorithm for
mean-payoff parity games. On the other hand, Czerwinski et
al.~\cite{CzerwinskiEA19} give a quasipolynomial lower bound on
universal trees, implying a barrier for prospective polynomial-time
parity game solving algorithms. Chatterjee et
al.~\cite{ChatterjeeEA18} describe a quasipolynomial time
set-based symbolic algorithm for parity
game solving that is parametric in a \emph{lift} function that
determines how ranks of nodes depend on the ranks of their successors,
and thereby unifies the complexity and correctness analysis of various
parity game algorithms. Although part of the parity game structure is
encapsulated in a set operator $\mathit{CPre}$, the development is
tied to standard parity games, e.g.\ in the definition of the
$\mathit{best}$ function, which picks minimal or maximal ranks of
successors depending on whether a node belongs to $\mathsf{Abelard}$
or $\mathsf{Eloise}$.

Early work on the computation of unrestricted nested fixpoints has
shown that greatest fixpoints require less effort in the fixpoint
iteration algorithm, which can hence be optimized to compute nested
fixpoints with just $\mathcal{O}(n^{\frac{k}{2}})$ calls of the
functions at hand~\cite{LongEA94,Seidl96}, improving the previously
known (straightforward) bound $\mathcal{O}(n^k)$; here, $n$ denotes
the size of the basis of the lattice and $k$ the number of fixpoint
operators.
Recent progress in the field has established the above-mentioned
approaches using progress measures~\cite{HasuoEA16} and fixpoint
games~\cite{BaldanEA19} in general settings, both with a view to
applications in coalgebraic model checking like in the present
paper. In comparison to the present work, the respective bounds on the
required number of function iterations in the above unrestricted
approaches all are exponential.

A preprint of our present results, specifically the quasipolynomial
upper bound on function iteration in fixpoint computation, has been
available as an arXiv preprint for some
time~\cite{HausmannSchroeder19}. Subsequent to this preprint, Arnold,
Niwinski and Parys~\cite{ArnoldEA21} have improved the actual run time
by reducing the overhead incurred per iteration (and they give a form
of quasipolynomial lower bound for universal-tree-based algorithms),
working (like~\cite{HausmannSchroeder19}) in the less general setting
of directly nested fixpoints over powerset lattices; we show in
Section~\ref{sec:lifting} how such an improvement can be incorporated
also in our lattice-based algorithm.

\section{Notation and Preliminaries}
Let $U$ and $V$ be sets, and let $R\subseteq U\times U$ be a binary
relation on $U$. For $u\in U$, we then put
$R(u):=\{v\in U\mid (u,v)\in R\}$. We put $[k]=\{0,\ldots,k\}$ for
$k\in\mathbb{N}$.
\emph{Labelled graphs} $G=(W,R)$
consist of a set $W$ together with a relation
$R\subseteq W\times A \times W$ where $A$ is some set of labels;
typically, we use $A=[k]$ for some $k\in\mathbb{N}$.  An
\emph{$R$-path} in a labelled graph is a finite or infinite sequence
$v_0,a_0,v_1,a_1,v_2\dots$ (ending in a node from~$W$ if finite) such
that $(v_i , a_i,  v_{i+1})\in R$ for all~$i$. For $v\in W$ and $a\in A$,
we put $R_a(v)=\{w\in W\mid (v,a,w)\in R\}$ and sometimes write $|G|$
to refer to~$|W|$.  As usual, we write $U^*$ and $U^\omega$ for the
sets of finite sequences or infinite sequences, respectively, of
elements of~$U$.  The \emph{domain} $\mathsf{dom}(f)$ of a partial
function $f:U\parti V$ is the set of elements on which~$f$ is defined.
As usual, the \emph{(forward) image} of $A'\subseteq A$ under a function
$f:A\to B$ is $f[A']=\{b\in B\mid \exists a\in A'.\, f(a)=b\}$
and the \emph{preimage} $f^{-1}[B']$ of $B'\subseteq B$ under $f$ is defined
by $f^{-1}[B']=\{a\in A\mid \exists b\in B'.\, f(a)=b\}$.
\emph{Projections} $\pi_j:A_1\times \ldots\times A_m\to A_j$ for
$1\leq j\leq m$ are given by $\pi_i(a_1,\ldots,a_m)=a_j$.
We often regard (finite) sequences
$\tau=u_0,u_1,\ldots\in U^*\cup U^\omega$ of elements of $U$ as
partial functions of type $\mathbb{N}\parti U$ and then write
$\tau(i)$ to denote the element $u_i$, for $i\in\mathsf{dom}(\tau)$.
For $\tau\in U^*\cup U^\omega$, we define the set
\begin{math}
\mathsf{Inf}(\tau)=\{u\in U\mid \forall
i\geq 0.\,\exists j>i.\,\tau(j)=u\}
\end{math}
of elements that occur infinitely often in $\tau$ (so
$\mathsf{Inf}(\tau)=\emptyset$ for $\tau\in U^*$). An infinite
$R$-path $v_0,p_0,v_1,p_1,\dots$ in a labelled graph $G=(W,R)$ with
labels from $[k]$ is \emph{even} if
$\max(\mathsf{Inf}(p_0,p_1,\dots))$ is even, and~$G$ is \emph{even} if
every infinite $R$-path in~$G$ is even. We write $\Pow(U)$ for the
powerset of $U$, and $U^m$ for the $m$-fold Cartesian product
$U\times\dots\times U$. 
\mysubsec{Finite Lattices and Fixpoints} 
A \emph{finite lattice} $(L,\sqsubseteq)$ (often written just as $L$)
 consists of a non-empty finite set
$L$ together with a partial order $\sqsubseteq$ on $L$, such that
there is, for all subsets $X\subseteq L$, a join $\bigsqcup X$ and
a meet $\bigsqcap X$. The least and greatest elements of $L$ are defined
as $\top=\bigsqcup\emptyset$ and $\bot=\bigsqcap\emptyset$,
respectively.
A set $B_L\subseteq L$ such that $l=\bigsqcup\{b\in B_L\mid b\sqsubseteq l\}$
for all $l\in L$ is a \emph{basis} of $L$.
Given a finite lattice $L$, a function $g:L^k\to L$  is
\emph{monotone} if $g(V_1,\ldots, V_k)\sqsubseteq g(W_1,\ldots,W_k)$
whenever $V_i\sqsubseteq W_i$ for all $1\leq i\leq k$.  For monotone
$f:L\to L$, we put
\begin{align*}
\GFP f=&\textstyle\bigsqcup\{V\sqsubseteq L\mid V\sqsubseteq f(V)\} &
\LFP f=&\textstyle\bigsqcap\{V\sqsubseteq L\mid f(V)\sqsubseteq V\},
\end{align*}
which, by the Knaster-Tarski fixpoint theorem, are the greatest and
the least fixpoint of $f$, respectively.  Furthermore, we define
$f^0(V)=V$ and $f^{m+1}(V)=f(f^m(V))$ for $m\geq 0$, $V\sqsubseteq L$;
since $L$ is finite, we have $\GFP f=f^{n}(\top)$ and $\LFP f=f^{n}(\bot)$
by Kleene's fixpoint theorem.  
Given a finite set $U$ and a natural number $n$, $(n^U,\sqsubseteq)$ is a finite lattice, where $n^U=\{f:U\to [n-1]\}$ denotes the function space
from $U$ to $[n-1]$ and $f\sqsubseteq g$ if and only if for all
$u\in U$, $f(u)\leq g(u)$. For $n=2$, we obtain the powerset lattice $(2^U,\subseteq)$, also denoted by $\Pow(U)$,
 with least and greatest elements $\emptyset$ and $U$, respectively,
and basis $\{\{u\}\mid u\in U\}$.

\mysubsec{Parity games} A \emph{parity game} $(V,E,\Omega)$ consists
of a set of \emph{nodes} $V$, a left-total relation
$E\subseteq V\times V$ of \emph{moves} encoding the rules of the game,
and a \emph{priority function} $\Omega:V\to\mathbb{N}$, which assigns
\emph{priorities} $\Omega(v)\in\mathbb{N}$ to nodes $v\in
V$. Moreover, each node belongs to exactly one of the two players
$\mathsf{Eloise}$ or $\mathsf{Abelard}$, where we denote the set of
$\mathsf{Eloise}$'s nodes by~$V_\exists$ and that of
$\mathsf{Abelard}$'s nodes by~$V_\forall$. A \emph{play}
$\rho\in V^\omega$ is an infinite sequence of nodes that follows the
rules of the game, that is, such that for all $i\geq 0$, we have
$(\rho(i),\rho(i+1))\in E$.  We say that an infinite play
$\rho=v_0,v_1,\ldots$ is \emph{even} if the largest priority that
occurs infinitely often in it (i.e.\
$\max (\mathsf{Inf}(\Omega\circ\rho))$) is even, and \emph{odd}
otherwise, and call this property the \emph{parity} of~$\rho$. Player
$\mathsf{Eloise}$ \emph{wins} exactly the even plays and player
$\mathsf{Abelard}$ \emph{wins} all other plays.  A
\emph{(history-free) $\mathsf{Eloise}$-strategy} $s:V_\exists\parti V$
is a partial function that assigns single moves $s(x)$ to
$\mathsf{Eloise}$-nodes $x\in\mathsf{dom}(s)$. Given an
$\mathsf{Eloise}$-strategy $s$, a play $\rho$ is an \emph{$s$-play} if
for all $i\in\mathsf{dom}(\rho)$ such that $\rho(i)\in V_\exists$, we
have $\rho(i+1)=s(\rho(i))$.  An $\mathsf{Eloise}$-strategy
\emph{wins} a node $v\in V$ if $\mathsf{Eloise}$ wins all $s$-plays
that start at $v$.  We have a dual notion of
$\mathsf{Abelard}$-strategies; \emph{solving} a parity game consists
in computing the \emph{winning regions} $\mathsf{win}_\exists$ and
$\mathsf{win}_\forall$ of the two players, that is, the sets of states
that they respectively win by some
strategy. 

  It is known that solving parity games is in \NP$\,\cap\,$\coNP (and,
  more specifically, in \UP$\,\cap\,$\coUP).  Recently it has also
  been shown~\cite{CaludeEA17} that for parity games with~$n$ nodes
  and~$k$ priorities, $\mathsf{win}_\exists$ and
  $\mathsf{win}_\forall$ can be computed in quasipolynomial
  time~$\mathcal{O}(n^{\log k+6})$.  Another crucial property of
  parity games is that they are \emph{history-free
    determined}~\cite{GraedelThomasWilke02}, that is, that every node
  in a parity game is won by exactly one of the two players and then
  there is a history-free strategy for the respective player that wins
  the node.


\section{Systems of Fixpoint Equations}\label{sec:fixpoints}

We now introduce our central notion, that is, systems of fixpoint
equations over a finite lattice.  Throughout, we fix a finite lattice 
$(L,\sqsubseteq)$ and a basis $B_L$ of $L$ such that $\bot\notin B_L$,
 and $k+1$ monotone
functions $f_i:L^{k+1}\to L$, $0\leq i\leq k$.
\begin{defn}
  A \emph{system of equations} consists of $k+1$ equations of
  the form
\begin{align*}
X_i=_{\eta_{i}} f_i(X_0,\ldots,X_k)
\end{align*}
where $\eta_i\in\{\LFP,\GFP\}$, briefly referred to as~$f$.  For a
partial valuation $\sigma:[k]\rightharpoonup L$, we
inductively define
\begin{align*}
\sem{X_i}^\sigma= \eta_i X_i. f_i^\sigma,
\end{align*}
where the function $f_i^\sigma$ is given by
\begin{align*}
f_i^\sigma(A)
=f_i(&\sem{X_0}^{\sigma'},\ldots,\sem{X_{i-1}}^{\sigma'},A,
\mathsf{ev}(\sigma',i+1),\ldots,\mathsf{ev}(\sigma',k))
\end{align*}
for $A\in L$, where 
$(\sigma[i\mapsto A])(j)=\sigma(j)$ for
$j\neq i$ and $(\sigma[i\mapsto A])(i)=A$, $\sigma'=\sigma[i\mapsto A]$
and where
$\mathsf{ev}(\sigma,j)=\sigma(j)$ if $j\in\mathsf{dom}(\sigma)$ and
$\mathsf{ev}(\sigma,j)=\sem{X_j}^\sigma$ otherwise (the latter clause
handles \emph{free variables}).  Then, the \emph{solution} of the
system of equations is ~$\sem{X_k}^\epsilon$ where
$\epsilon:[k]\rightharpoonup L$ denotes the empty
valuation (i.e.\ $\mathsf{dom}(\epsilon)=\emptyset$).  Similarly, we
can obtain solutions for the other components as $\sem{X_i}^\epsilon$
for $0\leq i<k$; we drop the valuation index if no confusion arises,
and sometimes write $\sem{X_i}_f$ to make the equation system~$f$
explicit.  We denote by $\mathsf{E}^{f_0}$ the solution $\sem{X_k}$
for the \emph{canonical system of equations} of the particular shape
\begin{align*}
X_i&=_{\eta_i} X_{i-1}  &
X_0&=_{\GFP} f_0(X_0,\ldots,X_k),
\end{align*} 
where $0<i\leq k$, $\eta_i=\LFP$ for odd $i$ and $\eta_i=\GFP$ for even $i$.
\end{defn}

\begin{expl}\label{exmp:eqs}
\begin{enumerate}
\item \emph{Parity games and the modal $\mu$-calculus:} Let
  $(V,E,\Omega)$ be a parity game with priorities $0$ to $k$, take
  $L=\Pow(V)$, and consider the canonical system of fixpoint equations
  $\mathsf{E}^{f_\exists}$ for the function
$f_\exists\colon \Pow(V)^{k+1}\to\Pow(V)$ 
given by
\begin{align*}
f_\exists(V_0,\ldots,V_k)=&
\{v\in V_\exists\mid E(v)\cap V_{\Omega(v)}\neq\emptyset
\}\,\cup \{v\in V_\forall\mid  E(v)\subseteq V_{\Omega(v)},
\}
\end{align*}
for $(V_0,\ldots,V_k)\in\Pow(V)^{k+1}$.  It is well known that
$\mathsf{win}_\exists = \mathsf{E}^{f_\exists}$, i.e.\ parity games
can be solved by solving fixpoint equation systems.
Intuitively, $v\in f_\exists(V_0,\ldots,V_k)$ iff $\mathsf{Eloise}$
can enforce that some node in~$V_{\Omega(v)}$ is reached in the next
step. The nested fixpoint expressed by $\mathsf{E}^{f_\exists}$ (in
which least (greatest) fixpoints correspond to odd (even) priorities)
is constructed in such a way that $\mathsf{Eloise}$ only has to rely
infinitely often on an argument $V_i$ for odd $i$ if she can also
ensure that some argument $V_{j}$ for $j>i$ is used infinitely often.

Model checking for the \emph{modal $\mu$-calculus}~\cite{Kozen83} 
and solving parity games are linear-time equivalent problems. 
Formulae of the $\mu$-calculus are evaluated over Kripke frames
$(U,R)$ with set of states $U$ and transition relation $R$.
Formulae $\phi$ of the $\mu$-calculus can be directly represented
as equation systems over the lattice $\Pow(U)$ by 
recursively translating $\phi$ to equations, mapping
subformulae $\mu X_i.\,\psi(X_0,\ldots, X_k)$ and
$\nu X_j.\,\chi(X_0,\ldots, X_k)$ to equations
\begin{align*}
X_i &=_\mu \psi(X_0,\ldots, X_k)& X_j &=_\nu \chi(X_0,\ldots, X_k),
\end{align*}
and interpreting the modalities $\Diamond$ and $\Box$ by functions
\begin{align*}
f_\Diamond(X) & = \{u\in U\mid R(u)\cap X\neq \emptyset\} &
f_\Box(X) & = \{u\in U\mid  R(u)\subseteq X\}
\end{align*}
The solution of the resulting system of equations then is the
truth set of the formula $\phi$, that is,
model checking for the model $\mu$-calculus reduces to
solving fixpoint equation systems.
Furthermore, satisfiability checking for the modal $\mu$-calculus
can be reduced to solving so-called 
\emph{satisfiability games}~\cite{FriedmannLange13a}, that
is, parity games that are played over the set of states of a 
determinized parity automaton.
These satisfiability games can be expressed as systems of fixpoint 
equations, where the functions track transitions in the determinized automaton.

\item \emph{Energy parity games and the energy $\mu$-calculus:}
Energy parity games~\cite{ChatterjeeDoyen12} are two-player games
played over weighted game arenas $(V,E,w,\Omega)$, where
$w:E\to\mathbb{Z}$ assigns integer weights to edges. The winning
condition is the combination of a parity condition with
a (quantitative) positivity condition on the 
sum of the accumulated weights. It has been 
shown~\cite{ChatterjeeDoyen12,AmramEA20}, that 
$b=n\cdot d \cdot W$ is a sufficient upper bound on energy level accumulations
in energy parity games with $n$ nodes, $k$ priorities and maximum absolute
weight $W$. We define a function 
$f^\mathsf{e}_\exists:((b+1)^{V})^{k+1}\to (b+1)^{V}$
over the finite lattice $(b+1)^V$ (whose elements are functions
from $V$ to the set $\{0,\ldots,b+1\}$) by putting
\begin{align*}
(f^\mathsf{e}_\exists(V_0,\ldots,V_k))(v)=
\begin{cases}
\min (\mathsf{en}(v,V_{\Omega(v)})) & \text{ if } v\in V_\exists\\
\max (\mathsf{en}(v,V_{\Omega(v)})) & \text{ if } v\in V_\forall,
\end{cases}
\end{align*}
for 
$(V_0,\ldots,V_k)\in ((b+1)^V)^{k+1}$ and $v\in V$, 
using $\mathsf{en}(v,\sigma)$ as abbreviation for
\begin{align*}
\mathsf{en}(v,\sigma)=&\,\{n\in\{0,\ldots,b\}\mid
\exists u\in E(v).\,n=\max\{0,\sigma(u)-w(v,u)\}\}\,\cup\\
&\,\{b+1\mid
\exists u\in E(v).\,\sigma(u)-w(v,u)>b\text{ or }\sigma(u)>b\},
\end{align*}
where $\sigma:V\to\{0,\ldots,b+1\}$.
Then it follows from the results of~\cite{AmramEA20} that player
$\mathsf{Eloise}$ wins a node $v$ in the energy parity game
with minimal initial credit $c<b+1$ if 
$(\mathsf{E}^{f^\mathsf{e}_\exists})(v)=c$, that is,
if the solution of the canonical equation system over $f^\mathsf{e}_\exists$
maps $v$ to a value $c$ that is at most $b$.

The \emph{energy $\mu$-calculus}~\cite{AmramEA20} is 
the fixpoint logic that corresponds to energy parity games.
Its formulae
are evaluated over \emph{weighted game structures}
and involve operators
$\Diamond_E\phi$ and $\Box_E\phi$
that are evaluated depending on the energy function 
$\sem{\phi}:V\to\{0,\ldots,b+1\}$ that is obtained by first
evaluating the argument formula $\phi$.
The semantics of the diamond operator then is an energy function that
assigns, to each state $v$, 
the least energy value $c\in\{0,\ldots,b+1\}$ such that 
there is a move from $v$ to some node $u$
such that the credit $c$ suffices to take the move from $v$ to $u$
and retain an energy level of at least $\sem{\phi}(u)$.
Formulae can be translated to equation systems over the finite lattice
$(b+1)^{V}$, where the functions for modal operators are defined according
to their semantics as presented in~\cite{AmramEA20}.
Solving these equation systems then amounts to model checking 
energy $\mu$-calculus formulae over weighted game structures.
\item \emph{Latticed $\mu$-calculi:} 
In latticed $\mu$-calculi~\cite{BrunsGodefroid04}, formulae
are evaluated over complete lattices $L$ rather than the powerset
lattice; for finite lattices $L$, formulae of latticed $\mu$-calculi
hence can be translated to fixpoint equation systems over~$L$, 
so that model
checking reduces to solving equation systems.
An associated latticed variant of games has been introduced 
in~\cite{KupfermanLustig07} and for finite lattices $L$,
solving latticed parity games over $L$ reduces to solving equation systems
over $L$.

\item \emph{The coalgebraic $\mu$-calculus
and coalgebraic parity games:}
The coalgebraic $\mu$-calculus~\cite{CirsteaEA11a}  supports generalized modal branching types
by using \emph{predicate liftings} to interpret formulae over $T$-coalgebras, that is, over structures
whose transition type is specified by an endofunctor $T$ on the
category of sets.
For instance the functors $T=\Pow$, $T=\mathcal{D}$ and
$T=\mathcal{G}$ map sets $X$ to their powerset $\Pow(X)$,
the set of probability distributions $\mathcal{D}(X)=
\{f: X\to[0,\ldots,1]\}$ over~$X$, and to the set of multisets $\mathcal{G}(X)=\{f:X\to \mathbb{N}\}$ over $X$, respectively. The corresponding
$T$-coalgebras then are Kripke frames (for $T=\Pow$), Markov chains (for $T=\mathcal{D}$) and graded transition systems (for $T=\mathcal{G}$),
 respectively.
 Instances of the coalgebraic $\mu$-calculus comprise, e.g.
the two-valued probabilistic $\mu$-calculus~\cite{CirsteaEA11a,LiuEA15}
with modalities $\Diamond_p\phi$ for $p\in[0,\ldots,1]$, expressing 
 `the next state satisfies $\phi$ with probability more than $p$'; 
 the graded $\mu$-calculus~\cite{KupfermanEA02} with
 modalities $\Diamond_g\phi$ for $g\in\mathbb{N}$, expressing 
 `there are more than $\phi$ successor states that satisfy $\phi$';
 or the alternating-time $\mu$-calculus~\cite{AlurEA02} that
 is interpreted over concurrent game frames and uses
 modalities $\langle D\rangle\phi$ for finite $D\subseteq \mathbb{N}$ (encoding
 a \emph{coalition}) that express that
 `coalition $D$ has a joint strategy to enforce $\phi$'.

It has been shown in
previous work~\cite{HausmannSchroder19b} that model checking for
coalgebraic $\mu$-calculi against coalgebras with state space $U$
reduces to solving a canonical fixpoint equation
system over the powerset lattice $\Pow(U)$, where the involved
function interprets modal operators using predicate liftings, as described
in~\cite{CirsteaEA11a,HausmannSchroder19b}.
This canonical equation system can alternatively be seen as
the winning region of $\mathsf{Eloise}$ in \emph{coalgebraic parity games},
a highly general variant of parity games where the game structure is
a coalgebra and nodes are annotated with modalities.
Examples include \emph{two-valued probabilistic parity games} 
and \emph{graded parity games} in which nodes and edges are annotated
with probabilities or grades, respectively. In order to win a
node $v$, player $\mathsf{Eloise}$ then has to have a strategy that 
picks a \emph{set} of moves to nodes that in turn are \emph{all} won by
$\mathsf{Eloise}$, and such that the joint probability (joint
grade) 
of the picked moves is greater than the probability (grade)
that is assigned to $v$. It is known that
solving coalgebraic parity games reduces to
solving fixpoint equation systems~\cite{HausmannSchroder19b}.

Furthermore, the satisfiability problem of the coalgebraic $\mu$-calculus
has been reduced to solving canonical fixpoint equations systems over
lattices
$\Pow(U)$, where~$U$ is the state set of a determinized parity automaton
and where the innermost equation checks for joint one-step satisfiability
of sets of coalgebraic modalities~\cite{HausmannSchroder19a}. 
By interpreting coalgebraic formulae over finite lattices $d^U$ 
rather than over powerset lattices, one obtains the \emph{finite-valued coalgebraic
$\mu$-calculus} (with values $\{0,\ldots,d\}$), which has
the \emph{finite-valued probabilistic $\mu$-calculus} (e.g.~\cite{Mio12})
as an instance.  
Model checking for the finite-valued probabilistic $\mu$-calculus
hence reduces to solving equation systems over the finite lattice
$d^{|U|}$, where $\{0,\ldots,d\}$ encodes a finite set of probabilities.
 
\end{enumerate}

\end{expl}

%
%


\section{Fixpoint Games and History-free Witnesses}

We instantiate the existing notion of fixpoint
games~\cite{Venema08,BaldanEA19}, which characterize solutions
of equation systems, to our setting (that is, to finite lattices), and
then use these games as a technical tool
to establish our crucial notion of history-freeness for
systems of fixpoint equations.

\begin{defn}[Fixpoint games]
  Let $X_i=_{\eta_i} f_i(X_0,\ldots,X_k)$, $0\leq i\leq k$, be a
  system of fixpoint equations.  The associated \emph{fixpoint game}
  is a parity game $(V,E,\Omega)$ with set of nodes
  $V=(B_L\times[k])\cup L^{k+1}$, where nodes from
  $B_L\times[k]$ belong to player $\mathsf{Eloise}$ and nodes
  from $L^{k+1}$ belong to player $\mathsf{Abelard}$.  For nodes
  $(u,i)\in B_L\times[k]$, we put
\begin{align*}
E(u,i)&=\{(U_0,\ldots,U_k)\in L^{k+1}\mid
u\sqsubseteq f_i(U_0,\ldots,U_k)\},
\end{align*} 
and for nodes $(U_0,\ldots,U_k)\in L^{k+1}$, we put 
\begin{equation*}  E(U_0,\ldots,U_k)=\{(u,i)\in B_L\times[k]\mid  u\sqsubseteq U_i\}.
\end{equation*}
The \emph{alternation depth} $\mathsf{ad}(i)$ of an equation $X_i=_{\eta_i}f_i(X_0,\ldots,X_1)$ is defined as $\mathsf{ad}^\mu_i$ if
$\eta_i=\mu$ and as $\mathsf{ad}^\nu_i$ if $\eta_i=\nu$, where $\mathsf{ad}^\mu_i$, $\mathsf{ad}^\nu_i$ are recursively defined by
\begin{align*}
\mathsf{ad}^\mu_i&=\begin{cases}
\mathsf{ad}^\mu_{i-1} & i>0, \eta_{i-1}=\mu\\
\mathsf{ad}^\nu_{i-1}+1 & i>0, \eta_{i-1}=\nu\\
1 & i=0
\end{cases} & 
\mathsf{ad}^\nu_i&=\begin{cases}
\mathsf{ad}^\mu_{i-1}+1 & i>0, \eta_{i-1}=\mu\\
\mathsf{ad}^\nu_{i-1} & i>0, \eta_{i-1}=\nu\\
0 & i=0
\end{cases} 
\end{align*}
for $0\leq i\leq k$.
The priority function $\Omega: V\to[\mathsf{ad}(k)]$ then is defined by
$\Omega(u,i)=\mathsf{ad}(i)$ and $\Omega(U_0,\ldots,U_k)=0$.
\end{defn}

\begin{rem}
In~\cite{BaldanEA19}, an alternative priority function $\Omega': V\to[2k+1]$ with
\begin{align*}
\Omega'(u,i)=\begin{cases}2i & \text{ if }\eta_i=\GFP\\
2i+1& \text{ if }\eta_i=\LFP
\end{cases}
\end{align*}
and $\Omega'(U_0,\ldots,U_k)=0$ is used.  Since $\mathsf{ad}(i)$ is
even if and only if $\eta_i$ is even, and moreover $\mathsf{ad(i)}\le\mathsf{ad}(j)$ for $i\le j$, and $i<j$ whenever $\mathsf{ad(i)}<\mathsf{ad}(j)$, it is easy to see that $\Omega$
and $\Omega'$ in fact assign identical parities to all plays.  In the
following, we will use the more economic parity function $\Omega$ so
that fixpoint games have only $d:=\mathsf{ad}(k)\leq k$ priorities.
\end{rem}
We import the associated characterization theorem \cite[Theorem 4.8]{BaldanEA19}:
\begin{theorem}[\hspace{-0.1pt}\cite{BaldanEA19}]\label{lem:fpgame}
  We have $u\sqsubseteq \sem{X_i}_{f}$ if and only if $\mathsf{Eloise}$ wins
  the node $(u,i)$ in the fixpoint game for the given system~$f$ of
  equations.
\end{theorem}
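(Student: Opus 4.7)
The plan is to follow the standard pattern for proving characterization theorems for fixpoint/parity games, proceeding by induction on the number $k+1$ of equations (equivalently, on the alternation depth), combined with the Kleene approximation characterization of $\LFP$ and $\GFP$ available on the finite lattice $L$. The base case is degenerate ($k=0$ with a single $\GFP$ equation, where Eloise's winning region over $B_L$ is exactly the largest postfixpoint of $f_0$). For the inductive step, I would peel off the outermost equation $X_k =_{\eta_k} f_k(X_0,\ldots,X_k)$ and relate plays of the full fixpoint game to plays of a smaller fixpoint game for the residual system in which $X_k$ has been replaced by fixed approximants.

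For the direction ``Eloise wins $(u,i)$ $\Longrightarrow$ $u\sqsubseteq\sem{X_i}_f$'', I would invoke history-free determinacy of parity games to extract a positional winning strategy $s$, and then define, for each $j\in[k]$, the set $W_j=\bigsqcup\{v\in B_L\mid s\text{ wins }(v,j)\}$. The key observation is that the tuple $(W_0,\ldots,W_k)$ satisfies a suitable system of (post/pre)fixpoint inequalities with respect to the $f_j$: at an Eloise-node $(v,j)\in W_j\times\{j\}$, the strategy $s$ supplies a tuple $(U_0,\ldots,U_k)$ with $v\sqsubseteq f_j(U_0,\ldots,U_k)$, and all Abelard responses lie in the winning region again, giving $U_\ell \sqsubseteq W_\ell$ on all components reached. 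Hence $v\sqsubseteq f_j(W_0,\ldots,W_k)$. A rank/unfolding argument then shows that $W_i\sqsubseteq \sem{X_i}_f$: since every infinite $s$-play has even maximal priority $\mathsf{ad}(j)$, the least-fixpoint components $X_\ell$ with $\eta_\ell=\LFP$ and $\mathsf{ad}(\ell)>\mathsf{ad}(j)$ are visited only finitely often along the play, which lets us bound the relevant $W_\ell$ by finite Kleene approximants of the corresponding $\LFP$'s, while $\GFP$ components are handled by post-fixpoint closure.

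For the converse direction, I would construct a positional strategy for Eloise at each node $(v,j)$ with $v\sqsubseteq \sem{X_j}_f$ by letting her choose $(U_0,\ldots,U_k)$ with $U_\ell=\sem{X_\ell}^{\sigma_j}$, where $\sigma_j$ is the natural partial valuation induced by the position in the fixpoint hierarchy. The fixpoint property $\sem{X_j}_f = f_j^{\sigma_j}(\sem{X_j}_f)$ ensures $v\sqsubseteq f_j(U_0,\ldots,U_k)$, so the move is legal, and every Abelard response lands in the ``safe'' region $\{(v',\ell)\mid v'\sqsubseteq \sem{X_\ell}^{\sigma_\ell}\}$. To verify that infinite consistent plays have even maximum infinitary priority, I would attach to each Eloise-node along the play the tuple of Kleene approximation ordinals (here: natural numbers, since $L$ is finite) at which each component $\sem{X_\ell}^{\sigma}$ is reached. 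Along the play, this lexicographic rank strictly decreases whenever an $\LFP$ equation of some alternation depth $d$ is unfolded without an intervening higher-priority move, ruling out $d$ being the maximal infinitary priority whenever $d$ is odd.

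The main obstacle is precisely this lexicographic rank argument for the parity condition in the backward direction: one must carefully juggle the partial valuations $\sigma_j$ and their Kleene-approximant witnesses so that ``unfolding an $\LFP$ at depth $d$'' strictly decreases the rank component indexed by $d$ while not increasing components indexed by depths $>d$, and so that a visit to a higher-priority node resets the lower components. Once this bookkeeping is in place, finiteness of $L$ (hence of all approximant indices) forces the maximum priority occurring infinitely often to correspond to a $\GFP$ equation, i.e.\ to be even; this yields $\sem{X_j}_f\sqsubseteq \{v\in B_L\mid \mathsf{Eloise}\text{ wins }(v,j)\}$ and completes the equivalence.
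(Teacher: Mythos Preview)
The paper does not prove this theorem at all: it is explicitly imported from \cite[Theorem~4.8]{BaldanEA19} (``We import the associated characterization theorem''), and no argument is given in the paper beyond the citation. So there is no in-paper proof to compare your proposal against.

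That said, your outline is the standard one for such game characterizations of nested fixpoints (and is essentially the approach taken in \cite{BaldanEA19} for the general continuous-lattice setting, specialized here to finite lattices where Kleene iteration terminates in finitely many steps). Two places where your sketch would need tightening if you actually wrote it out: in the forward direction, from the post-fixpoint inequality $W_j\sqsubseteq f_j(W_0,\dots,W_k)$ alone you cannot conclude $W_i\sqsubseteq\sem{X_i}_f$ without a per-node rank (signature) that records, for each odd alternation level, how many unfoldings of the corresponding $\LFP$ are still needed---the sentence ``least-fixpoint components \dots\ are visited only finitely often along the play'' is a property of a single play, not of the strategy globally, and does not by itself bound $W_\ell$ by a Kleene approximant. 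In the backward direction, the lexicographic-rank argument you describe is exactly right, but you must be careful that the partial valuations $\sigma_j$ attached to positions along a play are not positional data (they depend on the history of which higher-index variables have been unfolded and how far), so the strategy you build for $\mathsf{Eloise}$ is a priori history-dependent; positionality then comes for free from parity-game determinacy. None of this is a genuine gap---just the expected bookkeeping---and your plan would go through.
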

\begin{rem}
While this shows that parity game solving can 
be used to solve equation systems,
the size of fixpoint games is exponential in 
$|B_L|$, so they do not directly 
yield a quasipolynomial algorithm for solving equation systems.
\end{rem}
Next we define our notion of history-freeness for systems of fixpoint equations.

\begin{defn}[History-free witness]
  A \emph{history-free witness} for $u\sqsubseteq\sem{X_i}_f$ 
  is an even labelled graph $(W,R)$ with labels
  from~$[d]$ such that $W\subseteq B_L\times[d]$, 
  $(u,i)\in W$, and for
  all $(v,p) \in W$, we have $v\sqsubseteq f_p(U_0,\ldots, U_k)$
  where $U_j=\bigsqcup \pi_1[R_{\mathsf{ad}(j)}(v,p)]$ for $0\leq j\leq k$,
  noting that 
  $R_{\mathsf{ad}(j)}(v,p)\subseteq W$ so that 
  $\pi_1[R_{\mathsf{ad}(j)}(v,p)]\subseteq B_L$
  and $U_j\in L$.
\end{defn}
In analogy to history-free strategies for parity games,
history-free witnesses assign tuples $(R_1(v,p),\ldots,R_d(v,p))$ 
of sets $R_j(v,p)\subseteq W$ to
pairs $(v,p)\in W$ without relying on a history of previously
visited pairs.  
We have $|W|\leq (d+1)|B_L|$ and $|R|\leq (d+1)|W|^2$, that is,
the size of history-free witnesses is polynomial in $|B_L|$.  Crucially,
history-free witnesses always exist:
\begin{lemma}\label{lem:witness}
For all $u\in B_L$ and $i\in[k]$, we have
\begin{align*}
u\sqsubseteq \sem{X_i}_f\text{ if and only if there is a history-free witness for }u\sqsubseteq\sem{X_i}_f.
\end{align*}
\end{lemma}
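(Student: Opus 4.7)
The proof combines Theorem~\ref{lem:fpgame} (characterizing $u\sqsubseteq\sem{X_i}_f$ as Eloise winning $(u,i)$ in the exponential-size fixpoint game) with history-free determinacy of parity games. Conceptually, a history-free witness is a polynomial-size encoding of a history-free Eloise-strategy in the fixpoint game, in which Abelard's possible continuations are recorded as priority-labelled successor edges.

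For the forward direction, suppose $u\sqsubseteq\sem{X_i}_f$. By Theorem~\ref{lem:fpgame}, Eloise wins $(u,i)$, and history-free determinacy of parity games supplies a history-free winning strategy $\sigma$. I take $W$ to be the set of Eloise-nodes $\sigma$-reachable from $(u,i)$, and for every $(v,p)\in W$ with $\sigma(v,p)=(U_0,\ldots,U_k)$ and every $j\in[k]$, I add edges $((v,p),\mathsf{ad}(j),(w,j))$ for each $w\in B_L$ with $w\sqsubseteq U_j$; each such $(w,j)$ is itself $\sigma$-reachable and hence lies in $W$. Writing $U'_j=\bigsqcup\pi_1[R_{\mathsf{ad}(j)}(v,p)]$, the basis property of $B_L$ gives $U'_j\sqsupseteq U_j$, so monotonicity of $f_p$ yields the witness condition $v\sqsubseteq f_p(U_0,\ldots,U_k)\sqsubseteq f_p(U'_0,\ldots,U'_k)$. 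Evenness of $(W,R)$ transfers from $\sigma$: every infinite $R$-path mirrors an infinite $\sigma$-play whose Eloise-node priorities coincide with the $R$-labels (up to a one-step shift that does not affect $\mathsf{Inf}$), and the interposed Abelard priorities of $0$ do not alter $\max\mathsf{Inf}$.

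For the converse direction, given a witness $(W,R)$, I define the history-free Eloise-strategy $\sigma(v,p)=(U_0,\ldots,U_k)$ with $U_j=\bigsqcup\pi_1[R_{\mathsf{ad}(j)}(v,p)]$ on $W$; validity is immediate from the witness condition. Infinite $\sigma$-plays that stay inside $W$ project to even $R$-paths and are therefore won by Eloise. The delicate case is an Abelard escape to $(w,j)\notin W$ with $w\sqsubseteq U_j$, which can occur because a basis element below a join of basis elements need not coincide with any summand in a general lattice. I handle this by decomposing $w\sqsubseteq\bigsqcup\{w_l:(w_l,p_l)\in R_{\mathsf{ad}(j)}(v,p)\}$ and observing that the restriction of $(W,R)$ to descendants of each $(w_l,p_l)$ is itself an even sub-witness for $w_l\sqsubseteq\sem{X_{p_l}}_f$ (by inner induction on the witness); a Kleene-iteration argument within the nested fixpoint, together with monotonicity and the block structure of equations at priority $\mathsf{ad}(j)$, then yields $w\sqsubseteq\sem{X_j}_f$. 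Theorem~\ref{lem:fpgame} gives a winning Eloise-response from $(w,j)$, and stitching these with $\sigma$ produces a winning strategy from $(u,i)$, so $u\sqsubseteq\sem{X_i}_f$ by Theorem~\ref{lem:fpgame}. The main obstacle is precisely this escape case: reconciling Abelard's freedom to pick any basis element below a join with the polynomial-size witness requires the careful sub-witness decomposition and fixpoint-iteration argument sketched above.
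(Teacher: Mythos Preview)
Your forward direction is essentially the paper's argument: extract a history-free winning strategy for Eloise from Theorem~\ref{lem:fpgame}, and read off the witness from the strategy.

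In the converse direction you go beyond the paper, which simply asserts that the witness ``directly yields a winning Eloise-strategy''. You are right that the naive strategy $\sigma(v,p)=(U_0,\dots,U_k)$ with $U_j=\bigsqcup\pi_1[R_{\mathsf{ad}(j)}(v,p)]$ is not obviously winning: Abelard may move to some $(w,j)$ with $w\sqsubseteq U_j$ but $(w,j)\notin W$. However, your proposed repair does not work, and in fact the escape can occur for a reason you do not mention: even when $w$ \emph{is} one of the summands $w_l$, the second component~$j$ need not equal any~$p_l$ with $(w_l,p_l)\in R_{\mathsf{ad}(j)}(v,p)$; it only shares the same alternation depth.

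Concretely, take $L=\Pow(\{a,b\})$, $B_L=\{\{a\},\{b\}\}$, $k=1$, $\eta_0=\eta_1=\nu$ (so $\mathsf{ad}(0)=\mathsf{ad}(1)=0$), and constant functions $f_0(X_0,X_1)=\emptyset$, $f_1(X_0,X_1)=\{a\}$. Then $\sem{X_1}=\{a\}$ and $\sem{X_0}=\emptyset$. The graph $W=\{(\{a\},1)\}$, $R=\{((\{a\},1),0,(\{a\},1))\}$ is a valid history-free witness for $\{a\}\sqsubseteq\sem{X_1}$: we get $U_0=U_1=\{a\}$ and $\{a\}\sqsubseteq f_1(\{a\},\{a\})=\{a\}$, and the single loop has even label~$0$. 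Your strategy plays $(\{a\},\{a\})$ at $(\{a\},1)$; Abelard moves to $(\{a\},0)\notin W$. Your escape argument then tries to conclude $\{a\}\sqsubseteq\sem{X_0}$ from $\{a\}\sqsubseteq\sem{X_1}$ via ``the block structure of equations at priority $\mathsf{ad}(j)$'', but $\sem{X_0}=\emptyset$, so this is false. Equations sharing an alternation depth need not have related solutions, so the step from $w_l\sqsubseteq\sem{X_{p_l}}$ (all $p_l$ with $\mathsf{ad}(p_l)=\mathsf{ad}(j)$) to $w\sqsubseteq\sem{X_j}$ cannot be made in general. Separately, your ``inner induction on the witness'' is not well-founded: the sub-witness reachable from $(w_l,p_l)$ may be all of $(W,R)$, so no inductive measure decreases.

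The lemma is still true, but the converse direction needs a different argument---e.g.\ a direct signature/progress-measure style induction exploiting evenness of $(W,R)$ to show that every $(v,p)\in W$ satisfies $v\sqsubseteq\sem{X_p}$, rather than routing through the naive strategy in the fixpoint game.
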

\begin{proof}
  In one direction, we have $u\sqsubseteq\sem{X_i}_f$ so that $\mathsf{Eloise}$ wins the
  node $(u,i)$ in the according fixpoint game by
  Lemma~\ref{lem:fpgame}.  Let $s$ be a corresponding
  \emph{history-free} winning strategy (such strategies always
  exists, see e.g.~\cite{GraedelThomasWilke02}). 
   We inductively construct a
  witness for $u\sqsubseteq\sem{X_i}_f$, starting at
  $(u,i)$. When at $(v,p)\in B_L\times[k]$
  with $s(v,p)=(U_0,\ldots, U_k)$, 
  we put $R_i(v,p)=\bigcup_{j\mid\mathsf{ad}(j)=i}(U_j\times\{ j\})$ for $0\leq i\leq d$
  and hence have $\mathsf{ad}(j)=i$ for all $((v,p),i,(u,j))\in R$. 
   Since~$s$ is a winning strategy,
  the resulting graph $(W,R)$ is a history-free witness for
  $u\sqsubseteq\sem{X_i}_f$ by construction; in
  particular,~$(W,R)$ is even. 
  For the converse direction, the witness for 
  $u\sqsubseteq\sem{X_i}_f$ directly yields a winning
  $\mathsf{Eloise}$-strategy for the node $(u,i)$ in the associated fixpoint
  game. This implies $u\sqsubseteq\sem{X_i}_f$
  by Lemma~\ref{lem:fpgame}.\qed

\end{proof}

\section{Solving Equation Systems using Universal Graphs}

We go on to prove our main result. To this end,
we fix a system $f$ of fixpoint
equations $f_i:L^{k+1}\to L$, $0\leq i\leq k$, and put $n:=|B_L|$ and 
$d:=\mathsf{ad}(k)$ for the remainder of the paper.

\begin{defn}[Universal graphs~\cite{CzerwinskiEA19,ColcombetFijalkow19}]
  Let $G=(W,R)$ and $G'=(W',R')$ be labelled graphs with labels from
  $[d]$. A \emph{homomorphism of labelled graphs}
  from $G$ to $G'$ is a
  function $\Phi:W\to W'$ such that for all $(v,p,w)\in R$, we have
  $(\Phi(v),p,\Phi(w))\in R'$.  An \emph{$(n,d+1)$-universal graph} $S$
  is an even graph with labels from~$[d]$ such that for all
  even graphs $G$ with labels from~$[d]$ and with
  $|G|\leq n$, there is a homomorphism from $G$ to $S$.
\end{defn}
\noindent We fix an
$(n(d+1),(d+1))$-universal graph $S=(Z,K)$,
noting that there are $(n(d+1),(d+1))$-universal graphs (obtained from
universal trees) of size
quasipolynomial in~$n$ and $d$~\cite{CzerwinskiEA19}.  We now combine
the system $f$ with the universal graph $S$ to turn the parity conditions
associated to general systems of fixpoint equations into a safety
condition, associated to a single greatest fixpoint equation.
\begin{defn}[Chained-product fixpoint]\label{defn:timeouts}
We define a function
\begin{align*}
  g\colon\Pow(B_L\times [k]\times Z)&\to\Pow(B_L\times [k]\times Z)\\
  U \qquad& \mapsto \{(v,p,q)\in B_L\times [k]\times Z \mid v\sqsubseteq f_p(P^{U,q}_0,\ldots,P^{U,q}_k)\}
\end{align*} 
where
\begin{align*}
  P^{U,q}_i=\bigsqcup\{u\in B_L\mid \exists s\in K_{\mathsf{ad}(i)}(q).\, (u,i,s)\in U\}.
\end{align*} 
We refer to $Y_0=_{\GFP} g(Y_0)$ as the \emph{chained-product fixpoint (equation)} 
of $f$ and $S$.
\end{defn} 
We now show our central result: apart from the annotation with  states from the universal graph, the chained-product
fixpoint $g$ is the solution of the system $f$.
\begin{theorem}\label{thm:main}
For all $u\in B_L$ and
$0\leq i\leq k$, we have 
\begin{align*}
u\sqsubseteq \sem{X_i}_f\text{ if and only if there is }q\in Z
\text{ such that }(u,i,q)\in\sem{Y_0}_g.
\end{align*}
\end{theorem}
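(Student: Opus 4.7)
The plan is to establish both directions by converting between history-free witnesses (Lemma~\ref{lem:witness}) and elements of $\sem{Y_0}_g$, using the universal graph $S$ as the bridge.

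For the forward direction ($\Rightarrow$), if $u\sqsubseteq\sem{X_i}_f$ then Lemma~\ref{lem:witness} yields a history-free witness $(W,R)$ for this. This graph is even, has labels in $[d]$, and satisfies $|W|\leq n(d+1)$, so by universality of $S$ there exists a labelled-graph homomorphism $\Phi\colon (W,R)\to (Z,K)$. I would then define
\begin{equation*}
T=\{(v,p,\Phi(v,p))\mid (v,p)\in W\}\subseteq B_L\times[k]\times Z
\end{equation*}
and show $T\subseteq g(T)$; by the Knaster--Tarski principle for the greatest fixpoint this yields $T\subseteq\sem{Y_0}_g$, so in particular $(u,i,\Phi(u,i))\in\sem{Y_0}_g$. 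To establish $T\subseteq g(T)$, fix $(v,p)\in W$ with $q=\Phi(v,p)$: the witness property gives $v\sqsubseteq f_p(U_0,\ldots,U_k)$ with $U_j=\bigsqcup\pi_1[R_{\mathsf{ad}(j)}(v,p)]$, and the homomorphism property implies that for every $(w,j')\in R_{\mathsf{ad}(j)}(v,p)$ we have $\Phi(w,j')\in K_{\mathsf{ad}(j)}(q)$ and $(w,j',\Phi(w,j'))\in T$. Hence each contribution to $U_j$ is also a contribution to the join defining $P^{T,q}_{\,\cdot}$ at the appropriate index, and monotonicity of $f_p$ yields $v\sqsubseteq f_p(P^{T,q}_0,\ldots,P^{T,q}_k)$, i.e.\ $(v,p,q)\in g(T)$.

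For the reverse direction ($\Leftarrow$), assume $(u,i,q)\in\sem{Y_0}_g$. The plan is to construct an Eloise winning strategy with memory in $Z$ for the fixpoint game at $(u,i)$, then invoke Theorem~\ref{lem:fpgame} in combination with history-free determinacy of parity games to conclude $u\sqsubseteq\sem{X_i}_f$. Initialising the memory to $q$, Eloise maintains at every game position $(v,p)$ a memory state $r$ such that $(v,p,r)\in\sem{Y_0}_g$. At such a position she plays the tuple $(P^{\sem{Y_0}_g,r}_0,\ldots,P^{\sem{Y_0}_g,r}_k)$, which is a legal move because $\sem{Y_0}_g=g(\sem{Y_0}_g)$. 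When Abelard responds with $(w,j)$, Eloise updates her memory to some $s\in K_{\mathsf{ad}(j)}(r)$ with $(w,j,s)\in\sem{Y_0}_g$, which can be extracted from the join defining $P^{\sem{Y_0}_g,r}_j$. Along any resulting play, the memory sequence is an infinite $K$-path in $S$ whose label sequence is exactly the sequence of priorities $\mathsf{ad}(p_1),\mathsf{ad}(p_2),\dots$ visited in the fixpoint game, so evenness of $S$ forces the play to be even and hence won by Eloise.

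The main obstacle is the reverse direction: a naive attempt to project the natural graph on $\sem{Y_0}_g\subseteq B_L\times[k]\times Z$ down to $B_L\times[k]$ risks creating spurious infinite paths that violate evenness, because different visits to the same $(v,p)$ might choose different $Z$-components. The memoryful-strategy formulation sidesteps this by keeping the $Z$-state inside Eloise's memory, so that the $K$-path in $S$ is preserved without any projection. The one delicate bookkeeping step is extracting the new memory $s$ from the join defining $P^{\sem{Y_0}_g,r}_j$ after Abelard's move, which reduces to the same sort of basis-level reasoning that already underpins the easy direction of Lemma~\ref{lem:witness}.
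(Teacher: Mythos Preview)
Your proposal is correct and follows essentially the same approach as the paper: the forward direction embeds a history-free witness into the universal graph via a homomorphism~$\Phi$ and exhibits the resulting set as a postfixpoint of~$g$ (the paper phrases this as constructing a witness for the chained-product fixpoint, which for a single greatest fixpoint over a powerset lattice is the same thing), while the backward direction equips Eloise with memory in~$Z$ and builds a history-dependent winning strategy whose memory updates trace an infinite $K$-path in~$S$, so evenness of~$S$ makes every play even. Your direct use of the Knaster--Tarski postfixpoint principle and of $\sem{Y_0}_g$ itself (rather than an auxiliary witness for it) is a mild streamlining of the paper's argument, not a different idea.
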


\begin{proof}
  For the forward direction, let $u\sqsubseteq \sem{X_i}_f$.  By
  Lemma~\ref{lem:witness}, there is a history-free witness $G=(W,R)$
  for $u\sqsubseteq\sem{X_i}_f$. Since $S$ is a
  $(n(d+1),d+1)$-universal graph and since $G$ is a witness and hence
  an even labelled graph of suitable size $|G|\leq n(d+1)$, there is a
  graph homomorphism $\Phi$ from $G$ to $S$.
  Starting at $(u,i,\Phi(u,i),0)$, we inductively construct a witness for
  containment of $(u,i,\Phi(u,i))$ in $\sem{Y_0}_g$. When at
  $(v_1,p_1,\Phi(v_1,p_1),0)$ with $(v_1,p_1)\in W$, we put
\begin{align*}
R'_0(v_1,p_1,\Phi(v_1,p_1),0)=&\{(v_2,p_2,\Phi(v_2,p_2),0)\in B_L\times[d]\times Z\times[0]\mid\\
&(v_2,p_2)\in R_{\mathsf{ad}(p_2)}(v_1,p_1),\Phi(v_2,p_2)\in K_{\mathsf{ad}(p_2)}(\Phi(v_1,p_1))\,\}
\end{align*} 
and continue the inductive construction with all these 
$(v_2,p_2,\Phi(v_2,p_2),0)$, having $(v_2,p_2)\in W$.
The resulting structure $G'=(W',R')$ indeed is a witness for
containment of $(u,i,q)$ in $\sem{Y_0}_g$: $G'$ is even by
construction. Moreover, we need to show that for 
$(v_1,p_1,\Phi(v_1,p_1),0)\in W'$, we have
$(v_1,p_1,\Phi(v_1,p_1),0)\in g(\pi_1[R'_0(v_1,p_1,\Phi(v_1,p_1),0)])$, i.e. $v_1\sqsubseteq
f_{p_1}(P^{U,\Phi(v_1,p_1)}_0,\ldots,P^{U,\Phi(v_1,p_1)}_k)$
where $U=\pi_1[R'_0(v_1,p_1,\Phi(v_1,p_1),0)]$. Since
$G$ is a witness and $(v_1,p_1)\in W$ by construction of $W'$, we have
$v_1\sqsubseteq f_{p_1}(U_0,\ldots,U_k)$ where
$U_j=\bigsqcup(\pi_j[R_{\mathsf{ad}(i)}(v_1,p_1)])$.
By monotonicity of $f_{p_1}$, it thus suffices to show that
$U_j\sqsubseteq P^{U,\Phi(v_1,p_1)}_j$ for
$0\leq j \leq k$; by definition of $P^{U,\Phi(v_1,p_1)}_j$ this follows if
\begin{align*}
\pi_1[R_{\mathsf{ad}(j)}(v_1,p_1)]\subseteq&\{u\in B_L\mid \exists s\in K_{\mathsf{ad}(j)}(\Phi(v_1,p_1)).
(u,j,s)\in W\},
\end{align*}
where $W=\pi_1[R'_0(v_1,p_1,q_1,0)]$.
So let $w\in B_L$ such that 
$w\in\pi_1[R_{\mathsf{ad}(j)}(v_1,p_1)]$. 
Since $R$ is a witness that is constructed as in the proof
of Lemma~\ref{lem:witness}, we have $i=\mathsf{ad}(i')$ for all 
$((v',p'),i,(w',i'))\in R$. Thus $(w,j)\in R_{\mathsf{ad}(j)}(v_1,p_1)$
for some $j$ such that $\mathsf{ad}(j)=i$, that is,
$((v_1,p_1),\mathsf{ad}(j),(w,j))\in R$, hence
$(\Phi(v_1,p_1),\mathsf{ad}(j), \Phi(w,j))\in K$ because~$\Phi$ is a graph
homomorphism. By definition of $R'_0$ we have
$(w,j,\Phi(w,j),0)\in R'_0(v_1,p_1,\Phi(v_1,p_1),0)$
so that $(w,j,\Phi(w,j))\in \pi_1[R'_0(v_1,p_1,\Phi(v_1,p_1),0)]$.
We are done since $\Phi(w,j)\in K_{\mathsf{ad}(j)}(\Phi(v_1,p_1))$.

For the converse implication, let $(u_0,p_0,q_0)\in \sem{Y_0}_g$ 
for some $q_0\in Z$.
Let $G=(W,R)$ be a history-free witness for this fact.
By Lemma~\ref{lem:fpgame}, it suffices to provide a strategy 
in the fixpoint game for the system $f$ with
which $\mathsf{Eloise}$ wins the node $(u_0,p_0)$.
We inductively construct a \emph{history-dependent} strategy $s$ as follows:
For $i\geq 0$, we abbreviate $U_i=R_0(u_i,p_i,q_i,0)$.
We put $s(u_0,p_0)=(P^{U_0,q_0}_0,\ldots,P^{U_0,q_0}_k)$.
For the inductive step, let 
\begin{align*}
\tau=&(u_0,p_0),(P^{U_0,q_0}_0,\ldots, P^{U_0,q_0}_k),\ldots,(P^{U_{n-1},q_{n-1}}_0,\ldots, P^{U_{n-1},q_{n-1}}_k),(u_n,p_n)
\end{align*}
be a partial play of the fixpoint game that
follows the strategy that has been constructed
so far. Then we have an $R$-path $(u_0,p_0,q_0,0),(u_1,p_1,q_1,0),\ldots,(u_n,p_n,q_n,0)$,
where, for $0\leq i<n$, we have $(q_i,p_{i+1},q_{i+1})\in K$ since
$u_{i+1}\sqsubseteq P^{U_i,q_i}_{p_{i+1}}$ 
by the inductive construction. Put
$s(\tau)=(P^{U_n,q_n}_0,\ldots,P^{U_n,q_n}_k)$. 
Since $G$ is a witness, the strategy uses only moves that are
available to $\mathsf{Eloise}$ (i.e. ones with
$u_n\sqsubseteq f_{p_n}(s(\tau))$). Also, $s$ is a winning strategy
as can be seen by looking at the $K$-paths that are induced by
complete plays $\tau$ that follow $s$, as described (for partial plays)
above. Since $S$ is a universal graph and hence even, every such $K$-path is even and the sequence of priorities in
$\tau$ is just the sequence of priorities of one of these $K$-paths. 
\qed
\end{proof}

\begin{rem}
Since the set $\sem{Y_0}_g$ is the greatest fixpoint of $g$, it
can be computed by simple approximation from above, that is, 
as $g^m(B_L\times[k]\times Z)$ where
$m=|B_L\times [k]\times Z|$. However,
each iteration of the function $g$ may require up to $|Z|$ evaluations
of an equation. In the next section, we will show how this additional
iteration factor in the computation of $\sem{Y_0}_g$ can be avoided.
\end{rem}

\section{A Progress Measure Algorithm}\label{sec:lifting}

We next introduce a lifting algorithm that computes the
set~$\sem{Y_0}_g$ efficiently, following the paradigm of
the progress measure
approach for parity games
(e.g.~\cite{Jurdzinski00,JurdzinskiLazic17}).
Our progress measures will map pairs $(u,i)\in B_L\times[k]$
to nodes in a universal graph that is equipped with
a simulation order, that is, a total order that is suitable for
measuring progress.

\begin{defn}[Simulation order]
For natural numbers $i$, $i'$, we put $i\succeq i'$ if and only if
either $i$ is even and $i=i'$, or both $i$ and $i'$ are odd and
$i\geq i'$.   A total order~$\leq$ on~$Z$ is a \emph{simulation order}
if for all $q,q'\in Z$,
\begin{align*}
q\leq q'\text{  implies that} 
&\text{ for all $0\leq i\leq k$ and $s\in K_{i}(q)$, there are}\\
&\text{ \quad\,\,$i'\succeq i$ and $s'\in K_{i'}(q')$ such that $s\leq s'$}.
\end{align*}
\end{defn}

\begin{lemma}\label{lem:order}
There is an $(n(d+1),d+1)$-universal graph $(Z,K)$ of size quasipolynomial in
$n$ and $d$, and
over which a simulation order $\leq$ exists.
\end{lemma}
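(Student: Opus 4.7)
My plan is to invoke the construction of quasipolynomial universal graphs from universal trees, and to verify that the natural lexicographic order on leaves of the tree furnishes the required simulation order.

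Concretely, I would start from a Jurdzi\'nski--Lazi\'c style ordered universal tree $T$ of height at most $d+1$ into which every ordered tree with at most $n(d+1)$ leaves and bounded height embeds; such $T$ exists and has size quasipolynomial in $n$ and $d$ (as cited from~\cite{JurdzinskiLazic17,CzerwinskiEA19,ColcombetFijalkow19}). The associated universal graph $(Z,K)$ is built by taking $Z$ to be the leaves of $T$ and, for each pair $q,q' \in Z$ and each priority $i \in [d]$, putting $(q,i,q') \in K$ according to the standard rule based on the depth of the greatest common ancestor of $q$ and $q'$, with the exact cutoff depending on the parity of $i$ and on the lex position of $q'$ relative to $q$. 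Universality of $(Z,K)$ is the classical consequence of universality of $T$: any even labelled graph $G$ with $|G| \leq n(d+1)$ and labels in $[d]$ embeds into $(Z,K)$ via a tree embedding obtained by recursing on the even-odd hierarchical structure of paths in $G$.

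For the simulation order, I would take $\leq$ to be the lex order on $Z$ inherited from a fixed sibling ordering in $T$ and verify the simulation condition by case analysis. Given $q \leq q'$ and $(q,i,s) \in K$, if $s \geq q'$ in lex, then the greatest common ancestor of $q'$ and $s$ lies no higher than that of $q$ and $s$, so the edge $(q', i, s)$ is still present in $K$ and we may take $i' = i$, $s' = s$. In the remaining case $s < q'$ (which, by construction of $K$, can only occur for odd $i$), the meet of $q'$ and any admissible target lies strictly higher than the meet of $q$ and $s$; one then selects $s' \geq s$ (for instance a leaf just above $q'$ in lex) together with the odd priority $i' \geq i$ induced by this higher meet, yielding $i' \succeq i$ exactly as the relation demands.

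The main obstacle is threading the definition of $K$ so that both the universality argument and the simulation property hold with the given $\succeq$ relation; in particular, the asymmetry of $\succeq$ (rigid on even priorities, allowing upward slack on odd ones) must match the asymmetry of the edge definition. Fortunately, this is precisely the combinatorial structure underlying the lift operator of the Jurdzi\'nski--Lazi\'c succinct progress measure algorithm, whose monotonicity \emph{is} the simulation property, so the verification can be carried out by a direct adaptation of their analysis, yielding both the quasipolynomial size bound and the existence of the simulation order.
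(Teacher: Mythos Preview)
Your approach is essentially the paper's: build the universal graph from a quasipolynomial universal tree (same citations), take $Z$ to be the set of leaves, and use the lexicographic leaf order as the simulation order.

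One correction to your case split: the parenthetical claim that $s<q'$ ``can only occur for odd $i$'' is not right. In the paper's construction, for even $i$ the (unique) $s\in K_i(q)$ is the rightmost leaf in $q$'s subtree at the level associated with $i$, and when $q'$ lies in a subtree strictly further right one has $s<q'$ with $i$ even. The simulation property still holds trivially there (take $s'$ to be the rightmost leaf in $q'$'s subtree at the same level, giving $s'\geq q'>s$ and $i'=i$), so the omission is minor rather than fatal. The paper sidesteps this by organizing the case analysis differently: it splits on whether $K_i(q')$ is empty. When $K_i(q')\neq\emptyset$, the unique $s'\in K_i(q')$ satisfies $s\leq s'$ directly from $q\leq q'$ by comparing navigation-path coordinates; when $K_i(q')=\emptyset$, the index $i$ is \emph{genuinely} forced to be odd (even-priority successor sets are never empty), and one then passes to a larger odd $i'$ exactly as you outline in your second case.
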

\begin{proof}[Sketch]
It has been shown \cite[Theorem 2.2]{CzerwinskiEA19} (originally, in
  different terminology,~\cite{JurdzinskiLazic17}) that there are
 \emph{$(l,h)$-universal trees} (a concept similar to, but
slightly more concrete than universal graphs) with set of leaves $T$ 
such that $|T|\leq {2l} {{\log l+h+1}\choose{h}}$.
Leaves in universal trees are identified by
\emph{navigation paths}, that is, sequences of branching directions,
so that the leaves are linearly ordered by the lexicographic order
$\leq$ on navigation paths (which orders leaves from the
left to the right). 
As described in \cite{ColcombetFijalkow19}, one can obtain a universal graph
$(T,K)$ over $T$ in which transitions $(q,i,q')\in K$ for odd $i$ (the crucial case)
move to the left, that is, $q'$ is a leaf that is to the left of $q$ in 
the universal tree (so that $q'<q$), ensuring universality. 
As it turns out, the lexicographic
ordering on $T$ is a simulation order.
Adapting this construction to our setting, we put $l=n(d+1)$ 
and $h=d+1$ and obtain a
$(n(d+1),d+1)$-universal graph (along with a simulation order $\leq$) of
size at most 
${2n(d+1)} {{\log (n(d+1))+d+2}\choose{d+1}}$
which is quasipolynomial in $n$ and $d$.
\qed
\end{proof}
We fix an $(n(d+1),d+1)$-universal graph $(Z,K)$ and a simulation
order~$\leq$ on~$Z$ for the remainder of the paper (these exist by the
above lemma).

\begin{defn}[Progress measure, lifting function]\label{defn:pm}
We let $q_\mathsf{min}\in Z$ denote the least node w.r.t. $\leq$
and fix a distinguished top element $\star\notin Z$,
and extend $\geq$ to $Z\cup\{\star\}$ by putting 
$\star\geq q$ for all $q\in Z$.
  A \emph{measure} is a map
  $\mu\colon B_L\times[k]\to Z\cup\{\star\}$, i.e.\ assigns nodes in
  the universal graph or $\star$ to pairs
  $(v,p)\in B_L\times[k]$. A measure~$\mu$ is a \emph{progress
    measure} if whenever $\mu(v,p)\neq\star$, then
  $v\sqsubseteq f_p(U_0^{\mu,q},\ldots,U_k^{\mu,q})$ where $q=\mu(v,p)$ and
\begin{align*}
U_i^{\mu,q}=
\bigsqcup
\{u\in B_L\mid \exists s\in K_{\mathsf{ad}(i)}(q).\,\mu(u,i)\leq s\}.
\end{align*} 
We define a function 
$\mathsf{Lift}:(B_L\times[k]\to Z\cup\{\star\})\to (B_L\times[k]\to Z\cup\{\star\})$ on measures by 
\begin{align*}
(\mathsf{Lift}(\mu))(v,p)=
\min \{q\in Z\mid v\sqsubseteq f_p(U_0^{\mu,q},\ldots,U_k^{\mu,q}) \} 
\end{align*}
where $\min(Z')$
denotes the least element of $Z'$ w.r.t. $\leq$, for
$\emptyset\neq Z'\subseteq Z$; also we put $\min(\emptyset)=\star$.
\end{defn}
The lifting algorithm then starts with the least measure 
$\mathsf{m}_{\min}$ that maps all pairs
$(v,p)\in B_L\times [k]$ to the minimal node 
(i.e.~$\mathsf{m}_{\min}(v,p)=q_{\min}$) and
repeatedly updates the current measure
using $\mathsf{Lift}$ until the measure
stabilizes.
\mysubsec{Lifting algorithm} 
\begin{enumerate}
\item Initialize: Put
$\mu:=\mathsf{m}_{\min}$.
\item If $\mathsf{Lift}(\mu)\neq \mu$, then
put $\mu:=\mathsf{Lift}(\mu)$ and go to 2. Otherwise go to 3.
\item Return the set $\mathbb{E}=\{(v,p)\in B_L\times [k]\mid \mu(v,p)\neq \star\}$.
\end{enumerate}

\begin{lemma}[Correctness]\label{lem:liftalg}
For all $v\in B_L$ and $0\leq p\leq k$,
we have
\begin{align*}
(v,p)\in\mathbb{E} \text{ if and only if $v\in\sem{X_p}_f$}.
\end{align*}
\end{lemma}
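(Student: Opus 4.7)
The plan is to argue first that the algorithm converges to the pointwise least progress measure $\mu$, and then to establish both directions of the equivalence separately: soundness by extracting a history-free witness for $v \sqsubseteq \sem{X_p}_f$ from $\mu$ (so that Lemma~\ref{lem:witness} applies), and completeness by turning the chained-product fixpoint $\sem{Y_0}_g$ supplied by Theorem~\ref{thm:main} into a progress measure.

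To see that the algorithm returns the least progress measure, I would first verify that $\mathsf{Lift}$ is monotone on the finite lattice of measures: if $\mu_1 \leq \mu_2$ then every $s$ witnessing membership in $U_i^{\mu_2,q}$ also witnesses membership in $U_i^{\mu_1,q}$, so $U_i^{\mu_2,q} \sqsubseteq U_i^{\mu_1,q}$, and monotonicity of each $f_p$ turns this into $\mathsf{Lift}(\mu_1) \leq \mathsf{Lift}(\mu_2)$. Starting from $\mathsf{m}_{\min}$, the iterates form an increasing chain in a finite lattice and hence stabilize at $\LFP\mathsf{Lift}$; unfolding $\mathsf{Lift}(\mu) = \mu$ precisely recovers the progress measure condition, so the returned $\mu$ is the least progress measure.

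For soundness, given $(v,p) \in \mathbb{E}$, I would construct a history-free witness $(W,R)$ by taking $W = \mathbb{E}$ and placing an edge $((v',p'),\mathsf{ad}(i),(u,i)) \in R$ whenever $(u,i) \in W$ and $\mu(u,i) \leq s$ for some $s \in K_{\mathsf{ad}(i)}(\mu(v',p'))$. The local condition at each $(v',p')$ then holds because $\bigsqcup\pi_1[R_{\mathsf{ad}(j)}(v',p')] \sqsupseteq U_j^{\mu,\mu(v',p')}$, so the progress measure property together with monotonicity of $f_{p'}$ yields $v' \sqsubseteq f_{p'}(U_0,\ldots,U_k)$. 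The critical step is evenness of $(W,R)$: for each infinite $R$-path $(v_0,p_0),\ell_0,(v_1,p_1),\ell_1,\ldots$ with $\ell_j = \mathsf{ad}(p_{j+1})$, I would inductively build a $K$-path $\mu(v_0,p_0) = s_0',s_1',s_2',\ldots$ in $S$ whose priorities $\ell_0',\ell_1',\ldots$ satisfy $\ell_j' \succeq \ell_j$ while maintaining the invariant $\mu(v_j,p_j) \leq s_j'$; at each step the simulation order lifts the $R$-edge witness $s_{j+1} \in K_{\ell_j}(\mu(v_j,p_j))$ to some $s_{j+1}' \in K_{\ell_j'}(s_j')$ with $\ell_j' \succeq \ell_j$ and $s_{j+1} \leq s_{j+1}'$, and $\mu(v_{j+1},p_{j+1}) \leq s_{j+1} \leq s_{j+1}'$ preserves the invariant. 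Because $S$ is even the lifted $K$-path is even, and a case analysis on $M = \max(\mathsf{Inf}(\ell_0,\ell_1,\ldots))$ then transfers evenness: if $M$ were odd, infinitely many $\ell_j = M$ would force infinitely many odd $\ell_j' \geq M$ in the $K$-path, while no even value strictly greater than $M$ can recur in the $K$-path (such a value would, by $\ell_j' \succeq \ell_j$ forcing equality at even priorities, also recur in the $R$-path and contradict maximality of $M$), so the largest priority recurring in the $K$-path would be odd, contradicting evenness. Lemma~\ref{lem:witness} then completes this direction.

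For completeness, assume $v \sqsubseteq \sem{X_p}_f$; Theorem~\ref{thm:main} then yields $q \in Z$ with $(v,p,q) \in \sem{Y_0}_g$. I would define $\mu'(v',p') = \min\{q' \in Z \mid (v',p',q') \in \sem{Y_0}_g\}$, setting $\mu'(v',p') = \star$ if the set is empty, and verify that $\mu'$ is a progress measure: whenever $\mu'(v',p') = q' \in Z$, the post-fixpoint property of $\sem{Y_0}_g$ gives $v' \sqsubseteq f_{p'}(P_0^{\sem{Y_0}_g,q'},\ldots,P_k^{\sem{Y_0}_g,q'})$, and each $u$ contributing to $P_i^{\sem{Y_0}_g,q'}$ via some $(u,i,s) \in \sem{Y_0}_g$ with $s \in K_{\mathsf{ad}(i)}(q')$ satisfies $\mu'(u,i) \leq s$ by minimality of $\mu'$, so $u$ also contributes to $U_i^{\mu',q'}$; monotonicity of $f_{p'}$ closes the step. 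Since the algorithm returns the least progress measure $\mu$, we conclude $\mu(v,p) \leq \mu'(v,p) \leq q$, hence $\mu(v,p) \neq \star$ and $(v,p) \in \mathbb{E}$. The hardest part throughout will be the evenness argument of the previous paragraph: the simulation order only matches odd priorities up to upward promotion, and only the asymmetric shape of $\succeq$ (equality at even priorities, $\geq$ at odd priorities) makes evenness transfer in the required direction.
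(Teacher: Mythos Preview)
Your proposal is correct and follows essentially the same route as the paper's proof: both directions rely on the same constructions (the witness $(W,R)$ with $W=\mathbb{E}$ and edges determined by the progress measure together with a $K$-path lifted via the simulation order for soundness; the minimal measure extracted from $\sem{Y_0}_g$ via Theorem~\ref{thm:main} for completeness). The only cosmetic differences are that you argue evenness transfer by contradiction where the paper argues it directly, and you phrase completeness via ``$\mu'$ is a progress measure, hence dominates $\LFP\mathsf{Lift}$'' where the paper unfolds this as an explicit induction on the number of lifting iterations.
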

\begin{proof}[Sketch]
Let $\mu$ denote the progress measure that the algorithm computes.
For one direction of the proof, let $(v,p)\in\mathbb{E}$. 
By Lemma~\ref{lem:witness} it suffices to construct a witness
for $v\in\sem{X_p}_f$. We extract such a witness $(\mathbb{E},R)$
from the progress measure $\mu$, relying on 
the properties of the simulation order $\leq$ that is used to measure
the progress of $\mu$
to ensure that
any infinite sequence of measures that $\mu$ assigns to some $R$-path
induces an infinite (and hence even) path in the employed
universal graph. This shows that $(\mathbb{E},R)$ indeed is an even graph
and hence a witness. For the converse direction, let
$v\in\sem{X_p}_f$ so that there is, by Theorem~\ref{thm:main},
some $q\in Z$ such that $(v,p,q)\in\sem{Y_0}_g$.
For $(u,i)$ such that
there is $q'\in Z$  such that $(u,i,q')\in\sem{Y_0}_g$, let
$q_{(u,i)}\in Z$ denote the minimal  such node w.r.t.~$\leq$.
It now suffices that
  $\mu(u,i)\leq q_{(u,i)}$ for all such $(u,i)$, which is shown by
  induction on the number of iterations of the lifting algorithm.\qed
\end{proof}

\begin{corollary}\label{cor:qp}
Solutions of systems of fixpoint equations can be computed with
quasipolynomially many evaluations of equations.
\end{corollary}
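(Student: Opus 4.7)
The plan is to derive the corollary directly from the lifting algorithm of Section~\ref{sec:lifting}, combining its correctness (Lemma~\ref{lem:liftalg}) with a straightforward termination and complexity analysis. Since the corollary is about the number of equation evaluations (rather than absolute runtime), the argument will be essentially a counting argument over the progression of measures.

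First I would argue termination and bound the number of \textsf{Lift} applications. The key observation is that \textsf{Lift} is monotone with respect to the pointwise order induced by $\leq$ on measures, and once a value $\mu(v,p)$ is raised, it cannot be lowered again. Hence each of the $|B_L\times[k]| = n(k+1)$ pairs $(v,p)$ can be strictly lifted at most $|Z\cup\{\star\}|$ times before stabilizing, so the total number of strict updates to entries of $\mu$ is bounded by $n(k+1)(|Z|+1)$. By Lemma~\ref{lem:order}, $|Z|$ is quasipolynomial in $n$ and $d\le k$, so this bound is quasipolynomial in $n$ and $k$.

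Next I would relate the number of equation evaluations to the number of \textsf{Lift} applications. Computing $(\mathsf{Lift}(\mu))(v,p)$ for a single pair $(v,p)$ requires checking, for candidate nodes $q\in Z$ in increasing order, whether $v\sqsubseteq f_p(U_0^{\mu,q},\ldots,U_k^{\mu,q})$; constructing each $U_i^{\mu,q}$ uses only $\mu$ and the fixed universal graph $(Z,K)$ and involves no further evaluation of the equations $f_i$. Hence each entry update costs at most $|Z|+1$ evaluations of a single $f_p$. Multiplying with the bound from the previous paragraph, the total number of equation evaluations is bounded by $n(k+1)(|Z|+1)^2$, which remains quasipolynomial in $n$ and $k$ by Lemma~\ref{lem:order}.

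Finally, correctness of the output is exactly Lemma~\ref{lem:liftalg}: the set $\mathbb{E}$ returned by the algorithm satisfies $(v,p)\in\mathbb{E}$ iff $v\sqsubseteq\sem{X_p}_f$, so iterating over $p\in[k]$ yields the solutions of all components of the equation system. The main subtlety, and essentially the only nontrivial point, is to be precise about what an \emph{evaluation of an equation} means; I would adopt the natural convention that this refers to a single application of one of the $f_i$ to an argument tuple in $L^{k+1}$, in which case the argument above yields the claimed quasipolynomial bound without further work.
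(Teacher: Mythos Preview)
Your argument is correct and follows the paper's approach: bound the number of $\mathsf{Lift}$ iterations by the total number of strict measure increases, bound the evaluations needed per iteration, appeal to Lemma~\ref{lem:order} for the quasipolynomial size of~$Z$, and to Lemma~\ref{lem:liftalg} for correctness. The only discrepancy is in the bookkeeping: your product $n(k+1)(|Z|+1)^2$ multiplies the number of \emph{strict entry updates} by the cost of recomputing a \emph{single} entry, which tacitly assumes a worklist-style implementation (with the algorithm as literally stated one picks up an extra $n(k+1)$ factor), while the paper conversely asserts the sharper claim that each $\mathsf{Lift}$ iteration can be done with only $n(d+1)$ evaluations (one per entry, no search over~$Z$); either accounting is quasipolynomial and suffices for the corollary.
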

\begin{proof}
Given an $(n(d+1),d+1)$-universal graph $(Z,K)$ and a 
simulation order on $Z$,
the lifting algorithm terminates and returns the solution
of $f$ after at most $n(d+1)\cdot |Z|$ many iterations. This is 
the case since each iteration
(except the final iteration) increases the measure for at least one of
the $n(d+1)$ nodes
and the measure of each node can be increased at most $|Z|$ times.
Using the universal graph and the simulation order
from the proof of Lemma~\ref{lem:order},
we have $|Z|\leq {2n(d+1)} {{\log (n(d+1))+d+2}\choose{d+1}}$
so that the algorithm terminates after at most 
$2(n(d+1))^2 {{\log (n(d+1))+d+2}\choose{d+1}}\in 
\mathcal{O}((n(d+1))^{\log (d+1)})$
iterations of the function
$\mathsf{Lift}$.
Each iteration can be implemented to run with at most
$n(d+1)$ evaluations of an equation.
\qed
\end{proof}

\begin{corollary}\label{cor:fewpriopoly}
The number of function calls required
for the solution of systems of fixpoint equations with
$d\leq\log n$ is bounded by a polynomial in $n$ and
$d$.
\end{corollary}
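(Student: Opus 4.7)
The plan is to derive the claimed polynomial bound by direct substitution of $d\leq\log n$ into the quasipolynomial bound established in the proof of Corollary~\ref{cor:qp}, and then simplifying the resulting binomial expression.

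First I would recall from the proof of Corollary~\ref{cor:qp} the more concrete bound: the lifting algorithm terminates after at most $2(n(d+1))^2\binom{\log(n(d+1))+d+2}{d+1}$ iterations of $\mathsf{Lift}$, each requiring at most $n(d+1)$ evaluations of an equation, so that the total number of function calls is bounded by
\[
2(n(d+1))^3\binom{\log(n(d+1))+d+2}{d+1}.
\]
Since $n$ and $d+1$ already appear polynomially in the factor $(n(d+1))^3$, it suffices to show that the binomial coefficient is bounded by a polynomial in $n$ and $d$ whenever $d\leq\log n$.

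The main step is to estimate the binomial coefficient using the elementary inequality $\binom{a}{b}\leq 2^a$. This yields
\[
\binom{\log(n(d+1))+d+2}{d+1}\leq 2^{\log(n(d+1))+d+2}=4n(d+1)\cdot 2^d.
\]
Under the hypothesis $d\leq\log n$ we have $2^d\leq n$, so the coefficient is bounded by $4n^2(d+1)$. Plugging this into the expression above gives a bound of $8n^5(d+1)^4$ function calls, which is polynomial in $n$ and $d$ as claimed.

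I do not anticipate any real obstacle: the argument is a routine substitution followed by a simplification of the binomial factor. The only point that requires a small amount of care is the choice of a sufficiently loose but convenient bound on $\binom{a}{b}$; stronger bounds (e.g.\ $\binom{a+b}{b}\le(e(a+b)/b)^b$) would yield smaller exponents but are not needed to establish the qualitative polynomial bound stated in the corollary.
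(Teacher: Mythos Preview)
Your argument is correct. Both your proof and the paper's rest on the same underlying fact, namely that the binomial coefficient $\binom{\log(n(d+1))+d+2}{d+1}$ is polynomially bounded once $d\le\log n$; the paper obtains this by citing the corresponding result for universal trees (Theorem~2.8 in~\cite{CaludeEA17} together with Theorem~2.2 in~\cite{CzerwinskiEA19}) and then invoking Lemma~\ref{lem:order} to pass to a universal graph with a simulation order, whereas you simply substitute $d\le\log n$ into the explicit bound already recorded in the proof of Corollary~\ref{cor:qp} and simplify via $\binom{a}{b}\le 2^a$. Your route is thus more self-contained and elementary, while the paper's phrasing makes the connection to the existing literature on universal trees explicit; mathematically the two are the same argument presented at different levels of abstraction.
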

\begin{proof}
Following the insight of Theorem 2.8 in \cite{CaludeEA17}, 
Theorem 2.2. in \cite{CzerwinskiEA19} implies that
if $d< \log n$, then there is an $(n(d+1),d+1)$-universal tree of size polynomial in $n$ and $d$. In the same way as in the proof of 
Lemma~\ref{lem:order}, one obtains a universal graph
of polynomial size and a simulation order on it. \qed
\end{proof}

\begin{expl}
Applying Corollary~\ref{cor:qp} and Corollary~\ref{cor:fewpriopoly} to 
Example~\ref{exmp:eqs}, we obtain the following results:
\begin{enumerate}
\item The model checking problems for the energy $\mu$-calculus 
and finite latticed $\mu$-calculi are in $\mathsf{QP}$.
For energy parity games with sufficient upper bound $b$ on energy level accumulations,
we obtain a progress measure algorithm that terminates
after a number of iterations that is quasipolynomial in
$b$.
\item Under mild assumptions on the modalities (see~\cite{HausmannSchroder19b}), the model checking problem 
for the coalgebraic $\mu$-calculus is in $\mathsf{QP}$; in particular,
this yields $\mathsf{QP}$ model checking algorithms for the
graded $\mu$-calculus and the two-valued probabilistic $\mu$-calculus
(equivalently: $\mathsf{QP}$ progress measure algorithms for solving
graded and two-valued probabilistic parity games).
\item Under mild assumptions on the modalities 
(see~\cite{HausmannSchroder19a}),
  we obtain a novel upper bound
  $2^{\mathcal{O}({nd\log n})}$ for the satisfiability problems of
  coalgebraic $\mu$-calculi, in particular including the monotone $\mu$-calculus,
  the alternating-time $\mu$-calculus, the graded $\mu$-calculus and
  the (two-valued) probabilistic $\mu$-calculus, even when the latter
  two are extended with (monotone) polynomial inequalities. This
  improves on the best previous bounds in all cases.
\end{enumerate}
\end{expl}

\section{Conclusion}
We have shown how to use universal graphs to compute solutions of
systems of fixpoint equations $X_i=\eta_i.\, f_i(X_0,\ldots,X_k)$ (with
the~$\eta_i$ marking least or greatest fixpoints) that use functions
$f_i:L^{k+1}\to L$ (over a finite lattice~$L$ with basis $B_L$) 
and involve up to
$k+1$-fold nesting of fixpoints. Our progress measure
algorithm needs quasipolynomially
many evaluations of equations, and runs in time
$\mathcal{O}(q\cdot t(f))$, where
$q$ is a quasipolynomial in $|B_L|$ and the alternation depth
of the equation system, and
where~$t(f)$ is an upper bound on the time it takes to compute $f_i$
for all $i$.  

As a consequence of our results, the upper time bounds for
the evaluation of various general parity conditions improve.
Example domains beyond solving parity games to which our algorithm
can be instantiated comprise
model checking for latticed $\mu$-calculi and solving
latticed parity games\cite{BrunsGodefroid04,KupfermanLustig07},
solving energy parity games and model checking for
the energy $\mu$-calculus~\cite{{ChatterjeeDoyen12,AmramEA20}},
and model checking
and satisfiability checking for the coalgebraic $\mu$-calculus~\cite{CirsteaEA11a}. 
The resulting model checking algorithms for latticed $\mu$-calculi 
and the energy $\mu$-calculus run in time quasipolynomial in the 
provided basis of the respective lattice.
In terms of concrete instances of the coalgebraic $\mu$-calculus, we
obtain, e.g., quasipolynomial-time model checking for the graded~\cite{KupfermanEA02} and the probabilistic
$\mu$-calculus~\cite{CirsteaEA11a,LiuEA15} as new results
(corresponding results for, e.g., the alternating-time
$\mu$-calculus~\cite{AlurEA02} and the monotone
$\mu$-calculus~\cite{EnqvistEA15} follow as well but have already been
obtained in our previous work~\cite{HausmannSchroder19b}), as well as
improved upper bounds for satisfiability checking in the graded
$\mu$-calculus, the probabilistic $\mu$-calculus, the monotone
$\mu$-calculus, and the alternating-time $\mu$-calculus. We foresee
further applications, e.g.\ in the computation of fair bisimulations
and fair equivalence~\cite{HenzingerRajamani00,KupfermanEA03} beyond
relational systems, e.g.\ for probabilistic systems.

As in the case of parity games, a natural open question that remains
is whether solutions of fixpoint equations 
can be computed in polynomial time (which
would of course imply that parity games can be solved in polynomial
time). A more immediate perspective for further investigation is to
generalize the recent quasipolynomial variant~\cite{Parys19} of
Zielonka's algorithm~\cite{Zielonka98} for solving parity games to
solving systems of fixpoint equations, with a view to improving efficiency 
in practice.



\bibliographystyle{splncs04}
\bibliography{coalgml}

\providecommand{\noopsort}[1]{}
\begin{thebibliography}{10}
\providecommand{\url}[1]{\texttt{#1}}
\providecommand{\urlprefix}{URL }
\providecommand{\doi}[1]{https://doi.org/#1}

\bibitem{AlurEA02}
Alur, R., Henzinger, T., Kupferman, O.: Alternating-time temporal logic. J.\
  ACM  \textbf{49},  672--713 (2002),
  \url{https://doi.org/10.1145/585265.585270}

\bibitem{AmramEA20}
Amram, G., Maoz, S., Pistiner, O., Ringert, J.O.: Energy mu-calculus: Symbolic
  fixed-point algorithms for omega-regular energy games. CoRR
  \textbf{abs/2005.00641} (2020), \url{https://arxiv.org/abs/2005.00641}

\bibitem{ArnoldEA21}
Arnold, A., Niwinski, D., Parys, P.: A quasi-polynomial black-box algorithm for
  fixed point evaluation. In: Computer Science Logic, {CSL} 2021. LIPIcs,
  vol.~183, pp. 9:1--9:23. Schloss Dagstuhl -- Leibniz-Zentrum f{\"{u}}r
  Informatik (2021), \url{https://doi.org/10.4230/LIPIcs.CSL.2021.9}

\bibitem{BaldanEA19}
Baldan, P., K{\"{o}}nig, B., Mika{-}Michalski, C., Padoan, T.: Fixpoint games
  on continuous lattices. In: Principles of Programming Languages, {POPL} 2021.
  Proceedings of the ACM on Programming Languages, vol.~3, pp. 26:1--26:29. ACM
  (2019), \url{https://doi.org/10.1145/3290339}

\bibitem{BodlaenderEA01}
Bodlaender, H., Dinneen, M., Khoussainov, B.: On game-theoretic models of
  networks. In: Algorithms and Computation, {ISAAC} 2001. LNCS, vol.~2223, pp.
  550--561. Springer (2001), \url{https://doi.org/10.1007/3-540-45678-3_47}

\bibitem{BokerLehtinen18}
Boker, U., Lehtinen, K.: On the way to alternating weak automata. In:
  Foundations of Software Technology and Theoretical Computer Science, {FSTTCS}
  2018. LIPIcs, vol.~122, pp. 21:1--21:22. Schloss Dagstuhl -- Leibniz-Zentrum
  für Informatik (2018), \url{https://doi.org/10.4230/LIPIcs.FSTTCS.2018.21}

\bibitem{BrunsGodefroid04}
Bruns, G., Godefroid, P.: Model checking with multi-valued logics. In:
  Automata, Languages and Programming, {ICALP} 2004. LNCS, vol.~3142, pp.
  281--293. Springer (2004), \url{https://doi.org/10.1007/978-3-540-27836-8_26}

\bibitem{BruseEA14}
Bruse, F., Falk, M., Lange, M.: The fixpoint-iteration algorithm for parity
  games. In: Games, Automata, Logics and Formal Verification, GandALF 2014.
  {EPTCS}, vol.~161, pp. 116--130. Open Publishing Association (2014),
  \url{https://doi.org/10.4204/EPTCS.161.12}

\bibitem{CaludeEA17}
Calude, C., Jain, S., Khoussainov, B., Li, W., Stephan, F.: Deciding parity
  games in quasipolynomial time. In: Theory of Computing, {STOC} 2017. pp.
  252--263. {ACM} (2017), \url{https://doi.org/10.1145/3055399.3055409}

\bibitem{ChatterjeeDoyen12}
Chatterjee, K., Doyen, L.: Energy parity games. Theor.\ Comput.\ Sci.
  \textbf{458},  49--60 (2012). \doi{10.1016/j.tcs.2012.07.038}

\bibitem{ChatterjeeEA18}
Chatterjee, K., Dvor{\'{a}}k, W., Henzinger, M., Svozil, A.: Quasipolynomial
  set-based symbolic algorithms for parity games. In: Logic for Programming,
  Artificial Intelligence and Reasoning, LPAR 2018. EPiC, vol.~57, pp.
  233--253. EasyChair (2018), \url{https://doi.org/10.29007/5z5k}

\bibitem{ChatterjeeJurdzinskiHenzinger03}
Chatterjee, K., Jurdzi{\'{n}}ski, M., Henzinger, T.A.: Simple stochastic parity
  games. In: Baaz, M., Makowsky, J.A. (eds.) Computer Science Logic. pp.
  100--113. Springer (2003)

\bibitem{Chellas80}
Chellas, B.F.: Modal Logic. Cambridge University Press (1980)

\bibitem{CirsteaEA11a}
C{\^{\i}}rstea, C., Kupke, C., Pattinson, D.: {EXPTIME} tableaux for the
  coalgebraic {$\mu$}-calculus. Log.\ Meth.\ Comput.\ Sci.  \textbf{7} (2011),
  \url{https://doi.org/10.2168/LMCS-7(3:3)2011}

\bibitem{CirsteaEA11}
C{\^i}rstea, C., Kurz, A., Pattinson, D., Schr{\"o}der, L., Venema, Y.: Modal
  logics are coalgebraic. Comput.\ J.  \textbf{54},  31--41 (2011)

\bibitem{ColcombetFijalkow19}
Colcombet, T., Fijalkow, N.: Universal graphs and good for games automata: New
  tools for infinite duration games. In: Foundations of Software Science and
  Computation Structures, {FOSSACS} 2019. LNCS, vol. 11425, pp. 1--26. Springer
  (2019), \url{https://doi.org/10.1007/978-3-030-17127-8_1}

\bibitem{CzerwinskiEA19}
Czerwinski, W., Daviaud, L., Fijalkow, N., Jurdzinski, M., Lazic, R., Parys,
  P.: Universal trees grow inside separating automata: Quasi-polynomial lower
  bounds for parity games. In: Symposium on Discrete Algorithms, {SODA} 2019.
  pp. 2333--2349. {SIAM} (2019),
  \url{https://doi.org/10.1137/1.9781611975482.142}

\bibitem{DAgostinoVisser02}
D'Agostino, G., Visser, A.: Finality regained: A coalgebraic study of
  {Scott}-sets and multisets. Arch.\ Math.\ Logic  \textbf{41},  267--298
  (2002)

\bibitem{DaviaudEA18}
Daviaud, L., Jurdzinski, M., Lazic, R.: A pseudo-quasi-polynomial algorithm for
  mean-payoff parity games. In: Logic in Computer Science, {LICS} 2018. pp.
  325--334. {ACM} (2018), \url{https://doi.org/10.1145/3209108.3209162}

\bibitem{DawarGraedel08}
Dawar, A., Gr{\"{a}}del, E.: The descriptive complexity of parity games. In:
  Computer Science Logic, {CSL} 2008. LNCS, vol.~5213, pp. 354--368. Springer
  (2008), \url{https://doi.org/10.1007/978-3-540-87531-4_26}

\bibitem{EmersonEA01}
Emerson, E.A., Jutla, C., Sistla, A.P.: On model checking for the
  {$\mu$}-calculus and its fragments. Theor.\ Comput.\ Sci.  \textbf{258},
  491--522 (2001), \url{https://doi.org/10.1016/S0304-3975(00)00034-7}

\bibitem{EnqvistEA15}
Enqvist, S., Seifan, F., Venema, Y.: Monadic second-order logic and
  bisimulation invariance for coalgebras. In: Logic in Computer Science, {LICS}
  2015. pp. 353--365. IEEE (2015), \url{https://doi.org/10.1109/LICS.2015.41}

\bibitem{FaginHalpern94}
Fagin, R., Halpern, J.: Reasoning about knowledge and probability. J.\ {ACM}
  \textbf{41}(2),  340--367 (1994). \doi{10.1145/174652.174658},
  \url{https://doi.org/10.1145/174652.174658}

\bibitem{FearnleyEA19}
Fearnley, J., Jain, S., de~Keijzer, B., Schewe, S., Stephan, F., Wojtczak, D.:
  An ordered approach to solving parity games in quasi-polynomial time and
  quasi-linear space. {STTT}  \textbf{21}(3),  325--349 (2019),
  \url{https://doi.org/10.1007/s10009-019-00509-3}

\bibitem{FriedmannLange13a}
Friedmann, O., Lange, M.: Deciding the unguarded modal
  {\(\mathrm{\mu}\)}-calculus. J.\ Appl.\ Non-Classical Log.  \textbf{23},
  353--371 (2013), \url{https://doi.org/10.1080/11663081.2013.861181}

\bibitem{GraedelThomasWilke02}
Gr{\"{a}}del, E., Thomas, W., Wilke, T. (eds.): Automata, Logics, and Infinite
  Games: {A} Guide to Current Research, LNCS, vol.~2500. Springer (2002),
  \url{https://doi.org/10.1007/3-540-36387-4}

\bibitem{HahnScheweTurriniZhang16}
Hahn, E.M., Schewe, S., Turrini, A., Zhang, L.: A simple algorithm for solving
  qualitative probabilistic parity games. In: Chaudhuri, S., Farzan, A. (eds.)
  Computer Aided Verification. pp. 291--311. Springer (2016)

\bibitem{HasuoEA16}
Hasuo, I., Shimizu, S., C\^{\i}rstea, C.: Lattice-theoretic progress measures
  and coalgebraic model checking. In: Principles of Programming Languages, POPL
  2016. pp. 718--732. ACM (2016), \url{https://doi.org/10.1145/2837614.2837673}

\bibitem{HausmannSchroeder19}
Hausmann, D., Schr{\"{o}}der, L.: Computing nested fixpoints in quasipolynomial
  time. CoRR  \textbf{abs/1907.07020} (2019),
  \url{http://arxiv.org/abs/1907.07020}

\bibitem{HausmannSchroder19b}
Hausmann, D., Schr{\"o}der, L.: Game-based local model checking for the
  coalgebraic {$\mu$}-calculus. In: Concurrency Theory, CONCUR 2019. LIPIcs,
  vol.~140, pp. 35:1--35:16. Schloss Dagstuhl--Leibniz-Zentrum fuer Informatik
  (2019), \url{https://doi.org/10.4230/LIPIcs.CONCUR.2019.35}

\bibitem{HausmannSchroder19a}
Hausmann, D., Schr{\"o}der, L.: Optimal satisfiability checking for arithmetic
  {$\mu$}-calculi. In: Foundations of Software Science and Computation
  Structures, {FOSSACS} 2019. LNCS, vol. 11425, pp. 277--294. Springer (2019),
  \url{https://doi.org/10.1007/978-3-030-17127-8_16}

\bibitem{HenzingerRajamani00}
Henzinger, T., Rajamani, S.: Fair bisimulation. In: Tools and Algorithms for
  Construction and Analysis of Systems, {TACAS} 2000. LNCS, vol.~1785, pp.
  299--314. Springer (2000), \url{https://doi.org/10.1007/3-540-46419-0_21}

\bibitem{Jurdzinski00}
Jurdzi{\'{n}}ski, M.: Small progress measures for solving parity games. In:
  Symposium on Theoretical Aspects of Computer Science, {STACS} 2000. LNCS,
  vol.~1770, pp. 290--301. Springer (2000),
  \url{https://doi.org/10.1007/3-540-46541-3_24}

\bibitem{JurdzinskiLazic17}
Jurdzinski, M., Lazic, R.: Succinct progress measures for solving parity games.
  In: Logic in Computer Science, {LICS} 2017. pp.~1--9. {IEEE} Computer Society
  (2017), \url{https://doi.org/10.1109/LICS.2017.8005092}

\bibitem{Kozen83}
Kozen, D.: Results on the propositional {$\mu$-calculus}. Theor.\ Comput.\ Sci.
   \textbf{27},  333--354 (1983),
  \url{https://doi.org/10.1016/0304-3975(82)90125-6}

\bibitem{KupfermanLustig07}
Kupferman, O., Lustig, Y.: Latticed simulation relations and games. In:
  Automated Technology for Verification and Analysis, {ATVA} 2007. LNCS,
  vol.~4762, pp. 316--330. Springer (2007),
  \url{https://doi.org/10.1007/978-3-540-75596-8_23}

\bibitem{KupfermanEA03}
Kupferman, O., Piterman, N., Vardi, M.: Fair equivalence relations. In:
  Verification: Theory and Practice. LNCS, vol.~2772, pp. 702--732. Springer
  (2003), \url{https://doi.org/10.1007/978-3-540-39910-0_30}

\bibitem{KupfermanEA02}
Kupferman, O., Sattler, U., Vardi, M.: The complexity of the graded
  {$\mu$}-calculus. In: Automated Deduction, CADE 2002. LNCS, vol.~2392, pp.
  423--437. Springer (2002), \url{https://doi.org/10.1007/3-540-45620-1_34}

\bibitem{KupkeEA15}
Kupke, C., Pattinson, D., Schr{\"{o}}der, L.: Reasoning with global assumptions
  in arithmetic modal logics. In: Fundamentals of Computation Theory, {FCT}
  2015. LNCS, vol.~9210, pp. 367--380. Springer (2015)

\bibitem{Lehtinen18}
Lehtinen, K.: A modal {\(\mu\)} perspective on solving parity games in
  quasi-polynomial time. In: Logic in Computer Science, {LICS} 2018. pp.
  639--648. {ACM} (2018), \url{https://doi.org/10.1145/3209108.3209115}

\bibitem{LiuEA15}
Liu, W., Song, L., Wang, J., Zhang, L.: A simple probabilistic extension of
  modal mu-calculus. In: International Joint Conference on Artificial
  Intelligence, {IJCAI} 2015. pp. 882--888. {AAAI} Press (2015),
  \url{http://ijcai.org/proceedings/2015}

\bibitem{LongEA94}
Long, D.E., Browne, A., Clarke, E.M., Jha, S., Marrero, W.R.: An improved
  algorithm for the evaluation of fixpoint expressions. In: Computer Aided
  Verification, CAV 1994. LNCS, vol.~818, pp. 338--350. Springer (1994),
  \url{https://doi.org/10.1007/3-540-58179-0_66}

\bibitem{Mio12}
Mio, M.: On the equivalence of game and denotational semantics for the
  probabilistic {$\mu$}-calculus. Log.\ Methods Comput.\ Sci.  \textbf{8}
  (2012), \url{https://doi.org/10.2168/LMCS-8(2:7)2012}

\bibitem{Niwinski86}
Niwinski, D.: On fixed-point clones (extended abstract). In: Automata,
  Languages and Programming, ICALP 1986. LNCS, vol.~226, pp. 464--473. Springer
  (1986)

\bibitem{Parikh85}
Parikh, R.: The logic of games and its applications. Ann.\ Discr.\ Math.
  \textbf{24},  111--140 (1985),
  \url{https://doi.org/10.1016/S0304-0208(08)73078-0}

\bibitem{Parys19}
Parys, P.: Parity games: Zielonka's algorithm in quasi-polynomial time. In:
  Mathematical Foundations of Computer Science, MFCS 2019. LIPIcs, vol.~138,
  pp. 10:1--10:13. Schloss Dagstuhl -- Leibniz-Zentrum für Informatik (2019),
  \url{https://doi.org/10.4230/LIPIcs.MFCS.2019.10}

\bibitem{Pauly02}
Pauly, M.: A modal logic for coalitional power in games. J.\ Logic Comput.
  \textbf{12},  149--166 (2002)

\bibitem{Peleg87}
Peleg, D.: Concurrent dynamic logic. J.\ ACM  \textbf{34},  450--479 (1987),
  \url{https://doi.org/10.1145/23005.23008}

\bibitem{Rutten00}
Rutten, J.: Universal coalgebra: A theory of systems. Theor.\ Comput.\ Sci.
  \textbf{249},  3--80 (2000),
  \url{https://doi.org/10.1016/S0304-3975(00)00056-6}

\bibitem{SchroderPattinson09}
Schr{\"o}der, L., Pattinson, D.: Strong completeness of coalgebraic modal
  logics. In: Theoretical Aspects of Computer Science, STACS 09. pp. 673--684.
  Schloss Dagstuhl -- Leibniz-Zentrum f{\"u}r Informatik; Dagstuhl, Germany
  (2009)

\bibitem{SchroderVenema18}
Schr{\"{o}}der, L., Venema, Y.: Completeness of flat coalgebraic fixpoint
  logics. {ACM} Trans.\ Comput.\ Log.  \textbf{19}(1),  4:1--4:34 (2018)

\bibitem{Seidl96}
Seidl, H.: Fast and {S}imple {N}ested {F}ixpoints. Inf. Process. Lett.
  \textbf{59},  303--308 (1996),
  \url{https://doi.org/10.1016/0020-0190(96)00130-5}

\bibitem{Venema08}
Venema, Y.: Lectures on the modal {$\mu$}-calculus. Lecture notes, Institute
  for Logic, Language and Computation, Universiteit van Amsterdam (2008),
  \url{https://staff.fnwi.uva.nl/y.venema/teaching/ml/notes/20201212-mu.pdf}

\bibitem{Zielonka98}
Zielonka, W.: Infinite games on finitely coloured graphs with applications to
  automata on infinite trees. Theor.\ Comput.\ Sci.  \textbf{200}(1-2),
  135--183 (1998), \url{https://doi.org/10.1016/S0304-3975(98)00009-7}

\end{thebibliography}


\appendix
\section{Appendix}

\subsection{Omitted Proofs}

\textbf{Full proof of Lemma~\ref{lem:order}.}
\begin{proof}
It has been shown in 
\cite{CzerwinskiEA19}, Theorem 2.2. 
that there are $(l,h)$-universal trees with set of leaves $T$ 
such that $|T|\leq {2l} {{\log l+h+1}\choose{h}}$; following
 \cite{ColcombetFijalkow19}, we show how to obtain a universal graph
 $(Z,K)$ of the same size. 
The leaves in universal trees can be identified by \emph{navigation paths},
that is, tuples $(m_k,m_{k-2},\ldots,m_1)$
of natural numbers that describe the position of the leaf within the tree.
We put $Z=T$ and define the relation $K$ by putting
\begin{align*}
K_i(m_k,m_{k-2},\ldots,m_1)=
\begin{cases}
\{(m_k,m_{k-2},\ldots,m_i-1)\mid m_i>0\} & \text{if } i \text{ is odd}\\
\{(m_k,m_{k-2},\ldots,m_{i-1})\} & \text{if } i \text{ is even}
\end{cases}
\end{align*}
for $0\leq i\leq k$, $(m_k,m_{k-2},\ldots,m_i)\in Z$.
Here we use $(m_k,m_{k-2},\ldots,m_{p})$ (for odd $p\geq 1$) 
to denote the rightmost
leaf that is reachable from position $(m_k,m_{k-2},\ldots,m_{p})$
in the universal tree. By $(l,h)$-universality of the used universal tree,
$(K,Z)$ is a $(l,h)$-universal graph.
Let $\leq$ be the lexicographic order on the leaves of the universal
tree; then $\leq$ orders the leaves linearly from left to right.
We show that $\leq$ is a simulation order on $(Z,K)$.
Let $q=(m_k,m_{k-2},\ldots,m_1)\in Z$ and 
$q'=(m'_k,m'_{k-2},\ldots,m'_1)\in Z$ such that $q\leq q'$.
Furthermore let $0\leq i\leq k$ and 
$s\in K_i(q)$. We have
to show that there are $i'\succeq i$ and $s'\in K_{i'}(q')$
such that $s\leq s'$. If $q=q'$, then we are done. 
If $q\neq q'$ and
there is $s'\in K_{i}(q')$, then $s'=(m'_k,m'_{k-2},\ldots,m'_i-1)$
(if $i$ is odd) or $s'=(m'_k,m'_{k-2},\ldots,m'_{i-1})$ (if $i$ is even).
In the first case, we have $s=(m_k,m_{k-2},\ldots,m_i-1)$ and in the
second case, we have $s=(m_k,m_{k-2},\ldots,m_{i-1})$. In both cases
we have $i\succeq i$ and
$s\leq s'$ follows from $q\leq q'$.
If $q\neq q'$ and $K_{i}(q')=\emptyset$,
then $i$ is odd (and $m'_i=0$) and
there is an odd number $i'>i$ such that $m_{i'}<m'_{i'}$,
since $q< q'$.
Since $m'_{i'}>m_{i'}\geq 0$, there is $s'=(m_k,m_{k-2},\ldots,m_{i'}-1)\in K_{i'}(q')\neq \emptyset$. It remains to show
$i'\succeq i$ and $s=(m_k,m_{k-2},\ldots,m_{i-1})\leq
(m_k,m_{k-2},\ldots,m_{i'}-1)=s'$. The former
follows from 
$i'>i$ and the fact
that both $i$ and $i'$ are odd, and the latter follows from $i'>i$ together with 
$m_{i'}<m'_{i'}$ by the definition of lexicographic ordering.

 Putting
 $l=n(k+1)$ and $h=k+1$, we hence obtain an $(n(k+1),k+1)$-universal graph
 of (quasipolynomial) size at most ${2n(k+1)} {{\log (n(k+1))+k+2}\choose{k+1}}$ and a simulation order $\geq$ on this graph.\qed
\end{proof}
\textbf{Full proof of Lemma~\ref{lem:liftalg}.}
\begin{proof}
For one direction of the proof, we let 
$\mu$ be the measure that is obtained at the end of the execution of
the lifting algorithm. Then we have $\mathsf{Lift}(\mu)= \mu$ so that
$\mu$ is a progress measure.
Let $(v,p)\in\mathbb{E}$.
By Lemma~\ref{lem:witness} it suffices to exhibit a
history-free witness $(W,R)$ for $v\in\sem{X_p}_f$.
We put $W=\mathbb{E}$, having $(v,p)\in W$,
and define the relation $R\subseteq W\times[d]\times W$ 
by 
\begin{align*}
  R_{i'}(u,i)=
  \{(w,i')\in W\mid \exists s\in K_{\mathsf{ad}(i')}(\mu(u,i)).\,\mu(w,i')\leq  s) \}
\end{align*}
   for $(u,i)\in W$ and $0\leq i'\leq d$, having
$\star\neq\mu(u,i)\in Z$ by definition of $W$.  To show that $(W,R)$
is a history-free witness, we first show that for all $(v,p)\in W$, we
have $v\sqsubseteq f_p(U_0,\ldots,U_k)$ where
\begin{align*}
U_i=\bigsqcup\pi_1[R_{\mathsf{ad}(i)}(v,p)]
\end{align*} for $1\leq i\leq k$.
Since $\mu$ is a progress measure, we have 
\begin{align*}
v\sqsubseteq f_p(U^{\mu,\mu(v,p)}_0,\ldots,U^{\mu,\mu(v,p)}_k).
\end{align*}
Our claim $v\sqsubseteq f_p(U_0,\ldots,U_k)$ follows since 
we have, for all $1\leq i\leq k$, that by the definition 
of~$R_{\mathsf{ad}(i)}$,
\begin{align*}
U_i=\bigsqcup\pi_1[R_{\mathsf{ad}(i)}(v,p)]&=\bigsqcup\pi_1[\{(w,i)\in W\mid 
\exists s\in K_{\mathsf{ad}(i)}(\mu(v,p)).\,\mu(w,i)\leq s \}]\\
&=\bigsqcup\{w\in B_L\mid \exists s\in K_{\mathsf{ad}(i)}(\mu(v,p)).\,\mu(w,i)\leq s\}\\
&=U_i^{\mu,\mu(v,p)}.
\end{align*}
It remains to show that $(W,R)$ is an even graph. To see this, let
\begin{align*}
\pi=(u_0,p_0),p_1,(u_1,p_1),p_2,(u_2,p_2)\ldots
\end{align*}
be an infinite path in $(W,R)$, that is, let
$((u_i,p_i),p_{i+1},(u_{i+1},p_{i+1}))\in R$ for all
$i\geq 0$.  We have
to show that $\pi$ is even. 
We construct an infinite path
\begin{equation*}
\tau=s_0,q_1,s_1,q_2,s_2\ldots
\end{equation*}
in the universal graph $(Z,K)$ as follows: We put
$s_0=\mu(u_0,p_0)$, and define
$q_{i+1}$ and $s_{i+1}$ inductively from $s_i$ for all $i\geq 0$,
ensuring the invariant
\begin{equation}\label{eq:inv}
\mu (u_{i},p_{i})\leq s_i.
\end{equation}
First we note that~\eqref{eq:inv} holds trivially at $i=0$.  Next we
define $q_{i+1}$ and~$s_{i+1}$ for $i\geq 0$, assuming
that~\eqref{eq:inv} holds at~$i$.  By definition of $R$, it follows
from $((u_{i},p_{i}),p_{i+1},(u_{i+1},p_{i+1}))\in R$ that
there is $s\in K_{\mathsf{ad}({p_{i+1}})}(\mu(u_{i},p_{i}))$ such that
$\mu(u_{i+1},p_{i+1})\leq s$.  By the assumption on~$\geq$,
$s\in K_{\mathsf{ad}(p_{i+1})}(\mu(u_{i},p_{i}))$ and the invariant
$s_{i}\geq \mu(u_{i},p_{i})$ imply that there is some $p$
and some $s'\in K_{p}(s_{i})$ such that 
$p\succeq \mathsf{ad}(p_{i+1})$ and $s'\geq s$. Put $q_{i+1}:=p$ and
$s_{i+1}:=s'$; then $s_{i+1}\in K_{q_{i+1}}(s_{i})$ and
$\mu (u_{i+1},p_{i+1})\leq s\leq s_{i+1}$, so the
invariant~\eqref{eq:inv} holds at $i+1$. For later
use, we point out that we
have $q_{i}\succeq \mathsf{ad}(p_{i})$ for all $i> 0$ by construction of $\tau$.

The path $\tau$, being an infinite path in the universal graph
$(Z,K)$, is even.  Since we have $q_{i}\succeq \mathsf{ad}(p_{i})$ for all $i> 0$,
$p_{i}$ is even if and only if $q_{i}$ is even (and then 
$q_i=\mathsf{ad}(p_i)$),
and $p_{i}$ is odd if and only if $q_{i}$ is odd (and then
$q_{i}\geq \mathsf{ad}(p_{i})$). 
 Let $q$ be the highest priority that occurs
infinitely often in $\tau$.  Then $q$ is even and there are infinitely 
many positions
$i$ such that $q_i = q = \mathsf{ad}(p_i)$
and there is some position $i'$ such that for all
positions $i''>i'$, $\mathsf{ad}(p_{i''})\leq q_{i''} \leq q$.
Hence $\pi$ is an even path, as required.

For the converse direction, let $u\in\sem{X_i}_f$.  We note that the
lifting algorithm computes a progress measure as the least fixpoint of
the function $\mathsf{Lift}$ in the lattice of measures, pointwise
ordered by the order $\leq$ on nodes of the universal
graph.  Let $\mu$ be the
measure that is obtained at the end of the execution of the lifting
algorithm. Then we have $\mathsf{Lift}(\mu)=\mu$ and
$\mu=\mathsf{Lift}^n(\mathsf{m}_{\min})$ for some natural number $n$
(recall that $\mathsf{m}_{\min}$ denotes the minimal measure). 
We have to show that
$(u,i)\in\mathbb{E}$, i.e.\ $\mu(u,i)\neq \star$.  By
Theorem~\ref{thm:main}, there is $q\in Z$ such that
$(u,i,q)\in \sem{Y_0}_g$.  For all $(v,p)$ such that there is
$q'\in Z$ such that $(v,p,q')\in\sem{Y_0}_g$, we let $q_{(v,p)}\in Z$
denote the least (w.r.t. $\leq$) node of the universal graph with this
property.  It suffices to show that in this case,
\begin{align*}
(\mathsf{Lift}^n(\mathsf{m}_{\min}))(v,p)\leq q_{(v,p)}.
\end{align*}
We proceed by induction over $n$. If $n=0$, then
$\mathsf{Lift}^n(\mathsf{m}_{\min})=\mathsf{m}_{\min}$ and
$\mathsf{m}_{\min}(v,p)=q_\mathsf{\min}\leq q_{(v,p)}$.  If $n>0$,
then we have to show the inequality in
\begin{align*}
(\mathsf{Lift}^n(\mathsf{m}_{\min}))(v,p)
&= \mathsf{Lift}(\mathsf{Lift}^{n-1}(\mathsf{m}_{\min}))(v,p)\\
&= \min\{q\in Z\mid v\sqsubseteq f_p(U_0^{\mu_{n-1},q},\ldots, 
U_k^{\mu_{n-1},q})\}\leq q_{(v,p)}
\end{align*} 
where $\mu_{n-1}=\mathsf{Lift}^{n-1}(\mathsf{m}_{\min})$.  This
follows once we show that
\begin{align*}
q_{(v,p)}\in\{q\in Z\mid v\sqsubseteq 
f_p(U_0^{\mu_{n-1},q},\ldots, U_k^{\mu_{n-1},q})\},
\end{align*}
i.e. 
\begin{equation}\label{eq:qvp-goal}
v\sqsubseteq 
f_p(U_0^{\mu_{n-1},q_{(v,p)}},\ldots, U_k^{\mu_{n-1},q_{(v,p)}}).
\end{equation}
Now since $\sem{Y_0}_g$ is a fixpoint of $g$, we have
\begin{align*}
\sem{Y_0}_g=g(\sem{Y_0}_g)=\{(v,p,q)\in B_L\times [k]\times Z\mid v\sqsubseteq f_p(P_0^{\sem{Y_0}_g,q},\ldots,P_k^{\sem{Y_0}_g,q})\}.
\end{align*}
So given that $(v,p,q_{(v,p)})\in \sem{Y_0}_g$, we have
\begin{align*}
v\sqsubseteq f_p(P_0^{\sem{Y_0}_g,q_{(v,p)}},\ldots,P_k^{\sem{Y_0}_g,q_{(v,p)}}),
\end{align*}
which implies our goal~\eqref{eq:qvp-goal} by monotonicity of $f_p$
once we show that
$P_i^{\sem{Y_0}_g,q_{(v,p)}}\sqsubseteq U_i^{\mu_{n-1},q_{(v,p)}}$ for
$0\leq i\leq k$.  Both sides of this inequality are defined as joins
of sets; it suffices to show the set inclusion between these sets. So
let $u'\in B_L$ such that there is $q'\in K_{\mathsf{ad}(i)}(q_{(v,p)})$ such that
$(u',i,q')\in \sem{Y_0}_g$ ($P_i^{\sem{Y_0}_g,q_{(v,p)}}$ is the join
of all such~$u'$).  We claim that $u'$ is in the set
\begin{align*}
  \{u''\in B_L\mid \exists s\in K_{\mathsf{ad}(i)}(q_{(v,p)}).\,(\mathsf{Lift}^{n-1}(\mathsf{m}_{\min}))(u'',i)\leq s\}
\end{align*}
(whose join is $U_i^{\mu_{n-1},q_{(v,p)}}$
since $\mu_{n-1}=\mathsf{Lift}^{n-1}(\mathsf{m}_{\min})$). 
Indeed, pick $s=q'$ so that 
$s\in K_{\mathsf{ad}(i)}(q_{(v,p)})$.  It remains to show that
$(\mathsf{Lift}^{n-1}(\mathsf{m}_{\min}))(u',i)\leq s$.  By the
inductive hypothesis,
$(\mathsf{Lift}^{n-1}(\mathsf{m}_{\min}))(u',i)\leq q_{(u',i)}$, so it
suffices to show $q_{(u',i)}\leq s$. This follows from the fact that
$s=q'$ and $(u',i,q')\in \sem{Y_0}_g$, since $q_{(u',i)}$ is by
definition the least node (w.r.t. $\leq$) such that
$(u',i,q_{(u',i)})\in \sem{Y_0}_g$.  \qed
\end{proof}

\subsection{Details on Applications to Coalgebraic $\mu$-Calculi}\label{sect:coalg}

\begin{expl}\label{expl:prob-games}
We consider probabilistic parity games, which make use of
systems of fixpoint equations that deviate considerably from
(and apparently do not reduce easily to) the ones for standard parity games.
Probabilistic parity games are parity games
in which both moves and nodes are annotated with probabilities
(these games are
not to be confused with the $2\frac{1}{2}$-player \emph{stochastic parity games}
that are considered in~\cite{ChatterjeeJurdzinskiHenzinger03,HahnScheweTurriniZhang16}). 
They arise naturally as model checking games for the (two-valued)
probabilistic $\mu$-calculus (see Example~\ref{expl:logics}.\ref{expl:probmu}); we postpone a more formal and detailled treatment to Section~\ref{sect:coalg} below,
where we discuss the more general 
coalgebraic $\mu$-calculus (covering e.g. probabilistic, graded and
the alternating-time $\mu$-calculi as instances)
and its model checking problem (corresponding to solving e.g.
probabilistic, graded and alternating-time parity games).

A \emph{probabilistic parity game} $(V,D,\Omega,\sigma)$
consists of a set
$V$ of nodes, a set of probabilistic moves, given by a function $D$
which assigns probability distributions $D(v)$ over $V$
(with $\Sigma_{w\in V}(D(v))(w)=1$)
to nodes $v\in V$, a priority function $\Omega:V\to\mathbb{N}$
and a \emph{probability assignment} $\sigma:V\to[0,1]$.
The intuition of $(D(v))(w)=p$ is that the move from $v$ node to node $w$ has probability $p$.
A \emph{play} $\rho=v_0,v_1,\ldots$ in a probabilistic 
parity game is a sequence
of nodes such that for all $i\geq 0$, we have $(D(v_i))(v_{i+1})>0$ and the winning
conditions on plays are the same as in standard parity games.
 A \emph{(history-free) strategy} is a
partial function $s:V\rightharpoonup \Pow(V)$ such that
for all $v\in\mathsf{dom}(s)$, $\Sigma_{w\in s(v)}(D(v))(w)>\sigma(v)$,
that is, strategies pick \emph{sets} of moves whose
joint probability is larger than the probability assignment of 
the respective node. Crucially, strategies in probabilistic parity games involve branching, unlike strategies in standard parity games
which pick single moves for each node on which they are defined.
An $s$-play is a play $\rho=v_0,v_1,\ldots$
such that for all $i\geq 0$, $v_i\in\mathsf{dom}(s)$ and
$v_{i+1}\in s(v_i)$.
 Player
$\mathsf{Eloise}$ wins a node $v$ if there is a strategy $s$
such that $\mathsf{Eloise}$ wins all $s$-plays that start at $v$.
We point out that the (somewhat concealed) two-player nature of probabilistic parity games manifests in the fact that 
$\mathsf{Eloise}$ has to pick, in each turn, \emph{some} suitable set of moves, whereupon
$\mathsf{Abelard}$ can challenge \emph{any} of these moves
(that is, \emph{all} $s$-plays need to be even in order for $s$
to be a winning strategy for $\mathsf{Eloise}$).
Player $\mathsf{Eloise}$ hence wins a node $v$
if and only if
there is a set $W\subseteq V$ containing $v$ and
a graph $G=(W,R\subseteq W\times W)$ such that
\begin{itemize}
\item[--] each $R$-edge has $D$-probability
greater than 0,
\item[--] for each $w\in W$, the $R$-successors of $w$ have a joint
$D$-probability of more than $\sigma(w)$ and
\item[--] for each
infinite $R$-path that starts at $v$, the highest priority that is visited
infinitely often by the path is even.
\end{itemize}
Consider, for instance, the probabilistic parity game depicted below with
$V=\{0,1,2,3\}$, $\Omega(i)=i$ for $i\in V$ and, e.g.
$(D(0))(1)=0.2$, $(D(3))(1)=1$ and $(D(3))(3)=0$;
let us fix the probability assigment $\sigma$
by putting $\sigma(0)=0.7$, $\sigma(1)=0.3$
$\sigma(2)=0.1$ and $\sigma(3)=0$.
\begin{center}
\tikzset{every state/.style={minimum size=15pt}}
  \begin{tikzpicture}[
		auto,
    node distance=1.2cm,
    semithick
    ]
     \node[state] (0) {$0$};
     \node (yo) [below of=0] {};
     \node[state] (1) [left of=yo] {$1$};
     \node[state] (2) [right of=yo] {$2$};
     \node[state] (3) [below of=yo] {$3$};
     \path[->] (0) edge [loop right] node [right] {$0.5$} (0);
     \path[->] (0) edge [bend right=30] node [pos=0.3,left] {$0.2\,\,$} (1);
     \path[->] (0) edge node [pos=0.3,right] {$\,0.3$} (2);
     \path[->] (1) edge [loop left] node [left] {$0.8$} (1);
     \path[->] (1) edge [bend right=30] node [pos=0.3,right] {$\,\,0.2$\;\;\;\;\;} (0);
     \path[->] (2) edge [loop right] node [right] {$0.4$} (2);
     \path[->] (2) edge node [pos=0.6,right] {$\,0.6$} (3);
     \path[->] (3) edge node [pos=0.6,right] {$\,1$} (1);
     
  \end{tikzpicture}
\end{center}
To win e.g. the node $0$, $\mathsf{Eloise}$ has to have
a strategy that selects a set of nodes that have a joint probability
(of being reached from $0$ in one step) greater than $\sigma(0)=0.7$ and that are in turn all won by the strategy.
In this example, player $\mathsf{Eloise}$ wins the nodes $0$ and 
$2$ with the strategy $s$ defined by $s(0)=\{0,2\}$
and $s(2)=\{2\}$: this function indeed is a valid strategy since
it uses only moves with nonzero probabilities and also
respects the probability assignment $\sigma$ as we have
$\Sigma_{j\in s(0)} (D(0))(j)=0.5+0.3 > \sigma(0)=0.7$
and
$\Sigma_{j\in s(2)} (D(2))(j)=0.4 > \sigma(2)=0.1$.
Also, every $s$-play that starts at node $0$
is of the form $0^\omega$ or $0^*2^\omega$ 
and hence even and
every $s$-play that starts at node $2$ is of
the form $2^\omega$ and hence even.
On the other hand, there is no strategy with
which $\mathsf{Eloise}$ can win the nodes $1$ or $3$
since for any strategy $t$ with $1\in\mathsf{dom}(t)$,
we have $\Sigma_{w\in t(1)}(D(1))(w)>\sigma(1)=0.3$ and hence
$1\in t(1)$; but then there is an odd $t$-play of the form
$1^\omega$ so that $t$ is not a winning strategy
for $\mathsf{Eloise}$. Also, for any candidate winning strategy $u$ with
$3\in\mathsf{dom}(u)$, we have $u(3)=\{1\}$ and hence
$1\in\mathsf{dom}(u)$ which shows that $u$ is not a winning
strategy for $\mathsf{Eloise}$. Hence we have
$\mathsf{win}_\exists=\{0,2\}$ and $\mathsf{win}_\forall=\{1,3\}$.

The winning regions in probabilistic parity games are
again just nested fixpoints, where the functions however
deviate significantly from the functions for standard parity games.
Player $\mathsf{Eloise}$ has to pick sets of moves now, so 
it does not suffice to consider existential or universal branching,
like in the function $f_\exists$ for standard parity games.
We define
$f_{\exists p}:\Pow(V)^{k+1}\to\Pow(V)$, for $(V_0,\ldots,V_k)\in\Pow(V)^{k+1}$, by
putting
\begin{align*}
f_{\exists p}(V_0,\ldots,V_k)=&
\{v\in V\mid \exists 0\leq i\leq k.\,\Omega(v)=
i, \\
&\qquad\qquad\quad \Sigma_{w\in V_i} (D(v))(w)>\sigma(v)
\}
\end{align*}
Then we have $\mathsf{win}_\exists = \mathsf{E}^{f_{\exists p}}$ (formally, this is a
consequence of Lemma~\ref{lemm:modcheck}, below).
The winning region of $\mathsf{Abelard}$ is characterized in a dual manner.
Note that nodes $v$ with $\sigma(v)=p$ correspond to model checking
modal operators
$\langle p \rangle$ which require that their argument is satisfied
with probability more than $p$ in the next step
(see Example~\ref{expl:logics}.\ref{expl:probmu}). The full probabilistic
$\mu$-calculus also has dual operators $[p]$ which state that their 
argument holds with probability at least $1-p$ in the next step; 
for brevity, we refrain from modelling these operators in this example.
\end{expl}
While the above example apparently already goes 
beyond the setting of~\cite{ChatterjeeEA18} in which standard parity
games with existential and universal branching are hardwired throughout
(e.g. in the set operator $\mathit{CPre}$ and the function
$\mathit{best}$), we note that our results cover systems of fixpoint
equations for arbitrary functions over finite lattices, 
which need not be `game-like' at
all, that is, they need not be parametrized by any graph structure or priority
and player
assignment.

We next show how to apply our results to model checking and
satisfiability checking for generalized $\mu$-calculi in the setting
of coalgebraic logic, covering, for instance,
graded~\cite{KupfermanEA02}, probabilistic~\cite{CirsteaEA11,LiuEA15},
and alternating-time~\cite{AlurEA02} $\mu$-calculi. It has been shown in
previous work~\cite{HausmannSchroder19b} that model checking for
coalgebraic $\mu$-calculi reduces to computing winning regions in a
generalized variant of parity games where the game arenas are
coalgebras instead of Kripke frames.  We proceed to recall basic
definitions and examples in universal coalgebra~\cite{Rutten00} and
the coalgebraic $\mu$-calculus~\cite{CirsteaEA11a} and then continue to
show that our main result yields new quasipolynomial-time upper bounds
for the model checking problem and improves the known exponential-time
upper bound for the satisfiability problem~\cite{HausmannSchroder19a}
of the coalgebraic $\mu$-calculus. These generic results instantiate
to new upper bounds in all concrete cases except the standard
relational $\mu$-calculus.

The abstraction principle underlying universal coalgebra is to
encapsulate system types as functors, for our present purposes on the
category of sets. Such a functor $T:\Set\to\Set$, which we fix in the
following, maps every set~$X$ to a set~$TX$, and every map $f:X\to Y$
to a map $Tf:TX\to TY$, preserving identities and composition. We
think of $TX$ as a type of structured collections over~$X$; a basic
example is the covariant powerset functor~$\Pow$, which assigns to
each set its powerset and acts on maps by taking forward
image. Systems of the intended type are then cast as
\emph{$T$-coalgebras} $(C,\xi)$ (or just~$\xi$) consisting of a
set~$C$ of \emph{states} and a \emph{transition map} $\xi:C\to T C$,
thought of as assigning to each state~$x\in C$ a structured collection
$\xi(x)\in T C$ of successors. E.g.\ a $\Pow$-coalgebra
$\xi:C\to\Pow C$ assigns to each state a set of successors, i.e.\ is a
transition system.

Following the paradigm of \emph{coalgebraic logic}~\cite{CirsteaEA11},
we fix a set~$\Lambda$ of modal operators; we interpret each
$\hearts\in\Lambda$ as \emph{predicate lifting} $\Sem{\hearts}$
for~$T$, i.e.\ a natural transformation
\begin{equation*}
  \Sem{\hearts}_X:2^X\to 2^{TX}.
\end{equation*}
Here, the index~$X$ ranges over all sets; $2^X$ denotes the set of
maps $X\to 2$ into the two-element set $2=\{\bot,\top\}$, isomorphic
to the powerset of~$X$ (i.e.\ $2^{-}$ is the \emph{contravariant
  powerset functor}; we generally keep the conversion between $2^X$
and~$\Pow(X)$ implicit); and naturality means that
$\Sem{\hearts}_X(f^{-1}[A])=(Tf)^{-1}[\Sem{\hearts}_Y(A)]$ for
$f:X\to Y$ and $A\in2^Y$. Thus, the predicate lifing $\Sem{\hearts}$
indeed lifts predicates on a base set~$X$ to predicates on the
set~$TX$. Standard examples for $T=\Pow$ are the predicate liftings
for the $\Box$ and $\Diamond$ modalities, given by
\begin{align*}
  \Sem{\Box}_X(A)&=\{B\in\Pow X\mid  B\subseteq A\}\quad\quad\text{and}
\\
  \Sem{\Diamond}_X(A)&=\{B\in\Pow X\mid B\cap A\neq\emptyset\}
\end{align*}
for $A\in PX$. Since we mean to form fixpoint logics, we need to
require that every $\Sem{\hearts}$ is \emph{monotone}, that is,
$A\subseteq B\subseteq X$ implies
$\sem{\hearts}_X(A)\subseteq\sem{\hearts}_X(B)$. To support negation,
we assume moreover that $\Lambda$ is closed under \emph{duals}, i.e.\
for each $\hearts\in\Lambda$ we have $\overline{\hearts}\in\Lambda$
such that
$\Sem{\overline{\hearts}}_X(A)=T X\setminus\Sem{\hearts}_X(X\setminus
A)$, chosen so that $\overline{\overline{\hearts}}=\hearts$ (e.g.\
$\overline\Box=\Diamond$, $\overline\Diamond=\Box)$.

Given a set $\mathsf{Var}$ of \emph{fixpoint variables}, the set of
\emph{formulae} $\phi,\psi,\dots$ of the coalgebraic $\mu$-calculus is
then defined by the grammar
\begin{align*}
\psi,\phi := \top \mid \bot \mid \psi\vee\phi\mid \psi\wedge&\,\phi\mid
\hearts\psi\mid X\mid \eta X.\psi\\
&(\hearts\in\Lambda, X\in \mathsf{Var},\eta\in\{\mu,\nu\}).
\end{align*}
Given a $T$-coalgebra $\xi:C\to T C$ and a valuation
$\sigma:\mathsf{Var}\to\Pow C$, the \emph{extension}
\begin{equation*}
  \Sem{\phi}_\sigma\subseteq C
\end{equation*}
of a formula~$\phi$ is defined recursively by
$\sem{X}_\sigma=\sigma(X)$; the expected clauses for the propositional
operators ($\Sem{\top}_\sigma=C$; $\Sem{\bot}_\sigma=\emptyset$;
$\Sem{\phi\land\psi}_\sigma=\Sem{\phi}_\sigma\cap\Sem{\psi}_\sigma$;
$\Sem{\phi\lor\psi}_\sigma=\Sem{\phi}_\sigma\cup\Sem{\psi}_\sigma$);
and
\begin{align*}
 \sem{\hearts\psi}_\sigma&=\xi^{-1}[\sem{\hearts}(\sem{\psi}_\sigma)]\\
 \sem{\mu X.\psi}_\sigma&=\mathsf{LFP}\sem{\psi}^X_\sigma \\
\sem{\nu X.\psi}_\sigma&=\mathsf{GFP}\sem{\psi}^X_\sigma
\end{align*}
where the (monotone) map $\sem{\psi}^X_\sigma:\Pow C\to\Pow C$ is
defined by $\sem{\psi}^X_\sigma (A)=\sem{\psi}_{\sigma[X\mapsto A]}$
for $A\subseteq C$, with $(\sigma[X\mapsto A])(X)=A$ and
$(\sigma[X\mapsto A])(Y)=\sigma(Y)$ for $X\neq Y$. 

The \emph{alternation depth} $\mathsf{ad}(\eta X.\psi)$ of a fixpoint
$\eta X.\psi$ is the depth of alternating nesting of such fixpoints in
$\psi$ that depend on $X$; we assign \emph{odd} numbers to least
fixpoints and \emph{even} numbers to greatest fixpoints. E.g.  for
$\psi=\nu X.\phi$ and $\phi=\mu Y.(p\wedge\hearts X)\vee \hearts Y$,
we have $\mathsf{ad}(\psi)=2$, $\mathsf{ad}(\phi)=1$.  For a detailed
definition of alternation depth, see
e.g.~\cite{Niwinski86}.

\begin{expl}\label{expl:logics}
  As indicated above, the standard relational
  $\mu$-calculus~\cite{Kozen83} is one example of a coalgebraic
  $\mu$-calculus, with propositional atoms treated as nullary
  modalities. Further important examples are as
  follows~\cite{SchroderPattinson09,CirsteaEA11a,SchroderVenema18}.
\begin{enumerate}[wide]
\item The \emph{graded $\mu$-calculus}~\cite{KupfermanEA02} has
    modalities $\langle b\rangle$, $[b]$, indexed over $b\in\Nat$,
    read `in more than $b$ successors' and `in all but at most~$b$
    successors', respectively. These can be interpreted over
    relational structures but it is more natural and technically more
    convenient to use \emph{multigraphs}~\cite{DAgostinoVisser02},
    i.e.\ transition systems with edge weights (\emph{multiplicities})
    in $\Nat\cup\{\infty\}$, which are coalgebras for the multiset
    functor $\Bag$ given by $\Bag X=(X\to(\Nat\cup\{\infty\}))$. 
    Over $\Bag$, we interpret $\langle b\rangle$ and $[b]$ by the
    mutually dual predicate liftings
    \begin{align*}
      \Sem{\langle b\rangle}_X(A)&=\{\beta\in\Bag X\mid\textstyle\sum_{x\in
                                   A}\beta(x)>b\}\\
      \Sem{[b]}_X(A)&=\{\beta\in\Bag X\mid\textstyle\sum_{x\in X\setminus
                      A}\beta(x)\le b\}.
    \end{align*}
    E.g.\ the formula $\nu X.\,(\phi\land\Diamond_1 X)$ says that the
    current state is the root of an infinite tree with branching
    degree at least~$2$ (counting multiplicities) on which~$\phi$
    holds everywhere.
  \item\label{expl:probmu} The (two-valued) \emph{probabilistic
      $\mu$-calculus}~\cite{CirsteaEA11a,LiuEA15} is interpreted over
    Markov chains, which are coalgebras for the \emph{discrete
      distribution functor}~$\Dist$ where
    $\Dist X=\{\beta:X\to[0,1]\mid \sum_{x\in X}\beta(x)=1\}$ is the
    set of discrete probability distributions on~$X$, represented,
    e.g., as probability mass functions $\beta:X\to[0,1]$. We
    abuse~$\beta$ to denote also the induced probability distribution,
    writing $\beta(A)=\sum_{x\in A}\beta(x)$ for $A\subseteq X$.  The
    logic has modalities $[p]$, $\langle p\rangle$ indexed over
    $p\in[0,1]\cap\Rat$, interpreted over~$\Dist$ by
    \begin{align*}
      \Sem{\langle p\rangle}_X(A)&=\{\beta\in\Dist X\mid\beta(A)>p\}\\
      \Sem{\langle p\rangle}_X(A)&=\{\beta\in\Dist X\mid\beta(X\setminus A)\le p\}.
    \end{align*}
    This example (as well as the previous one) can be extended to
    admit (monotone) polynomial inequalities among probabilities (or
    multiplicities, respectively) instead of only comparison with
    constants, allowing, e.g., for expressing probabilistic
    independence~\cite{FaginHalpern94,KupkeEA15,HausmannSchroder19a}. In
    more detail, we can introduce $n$-ary modalities $L_{p,b}$,
    $M_{p,b}$ indexed over
    polynomials~$p\in\Rat_{\ge 0}[x_1,\dots,x_n]$ and rational numbers
    $b\ge 0$, with $L_{p,b}$ interpreted by the predicate lifting
    \begin{align*}
      \sem{L_{p,b}}_X(A_1,\ldots,A_n)=\{\beta&\in\Dist X\mid\\
      & 
      p(\beta(A_1),\ldots,\beta(A_n))> b\}
    \end{align*}
    and $M_{p,b}$ by the corresponding dual predicate lifting. E.g.\
    the formula
    \begin{equation*}
      \nu X.\,\mu Y.\,L_{x_1x_2,0.8}(p\land X,q\lor Y)
    \end{equation*}
    says roughly that if we independently sample two successors of the
    current state, then with probability at least $0.8$, the first
    successor state will satisfy~$p$, and then~$X$ again (continuing
    indefinitely), and the second successor state will remain on a
    path where it satisfies~$Y$ again until it eventually reaches~$q$.

\item \emph{Monotone $\mu$-calculus:} The \emph{monotone
      neighbourhood functor} $\CM$ maps a set~$X$ to the set
    \begin{equation*}
      \CM X=\{\FA\in 2^{(2^X)}\mid \FA \text{ upwards closed}\}
    \end{equation*}
    of set systems over~$X$ that are upwards closed under subset
    inclusion (i.e.\ $A\in\FA$ and $A\subseteq B$ imply $B\in\FA$).
    Coalgebras for~$\CM$ are \emph{monotone neighbourhood frames} in
    the sense of Scott-Montague semantics~\cite{Chellas80}. We take
    $\Lambda=\{\Box,\Diamond\}$ and interpret~$\Box$ over~$\CM$ by the
    predicate lifting
    \begin{align*}
      \Sem{\Box}_X(A)&=\{\FA\in\CM X\mid A\in\FA\}\\
                     &=\{\FA\in\CM X\mid\exists B\in\FA.\,B\subseteq A\},
    \end{align*}
    and $\Diamond$ by the corresponding dual lifting,
    $\Sem{\Diamond}_X(A)=\{\FA\in\CM X\mid (X\setminus
    A)\notin\FA\}=\{\FA\in\CM X\mid\forall B\in\FA.\,B\cap
    A\neq\emptyset\}$.
    The arising coalgebraic $\mu$-calculus is known as the
    \emph{monotone $\mu$-calculus}~\cite{EnqvistEA15}. When we add
    propositional atoms and actions, and replace~$\CM$ with its
    subfunctor~$\CM_s$ defined by
    $\CM_sX=\{\FA\in\CM X\mid \emptyset\notin\FA\owns X\}$, whose
    coalgebras are \emph{serial} monotone neighbourhood frames, we
    arrive at the ambient fixpoint logic of \emph{concurrent dynamic
      logic}~\cite{Peleg87} and Parikh's \emph{game
      logic}~\cite{Parikh85}. In game logic, actions are understood as
    atomic games of Angel vs.\ Demon, and we read $\Box_a\phi$ as
    `Angel has strategy to enforce~$\phi$ in game~$a$'. Game logic is
    then mainly concerned with composite games, formed by the control
    operators of dynamic logic and additional ones; the semantics can
    be encoded into fixpoint definitions. For instance, the formula
    $\nu X.\,p\land\Box_a X$ says that Angel can enforce~$p$ in the
    composite game where~$a$ is played repeatedly, with Demon deciding
    when to stop.
    
  \item \emph{Alternating-time $\mu$-calculus:} Fix a
    set~$N=\{1,\dots,n\}$ of \emph{agents}. Using alternative notation
    from \emph{coalition logic}~\cite{Pauly02}, we present the
    \emph{alternating-time $\mu$-calculus (AMC)}~\cite{AlurEA02} by
    modalities $[D]$, $\langle D\rangle$ indexed over
    \emph{coalitions} $D\subseteq N$, read `$D$ can enforce' and `$D$
    cannot prevent', respectively. We define a functor $\Gm$ by
    \begin{multline*}\textstyle
      \Gm X=\{(k_1,\dots,k_n,f)\mid
      k_1,\dots,k_n\in\Nat\setminus\{0\},\\ f:\big(\textstyle\prod_{i\in N}[k_i]\big)\to X\}
    \end{multline*}
    where we write $[k]=\{1,\dots,k\}$ in this example. We understand
    $(k_1,\dots,k_n,f)\in\Gm X$ as a one-step concurrent game with
    $k_i$ available moves for agent~$i\in N$, and outcomes in~$X$
    determined by the \emph{outcome function}~$f$ from a joint choice
    of moves by all the agents. For $D\subseteq N$, we write
    $S_D=\prod_{i\in D}[k_i]$. Given joint choices $s_D\in S_D$,
    $s_{\overline{D}}\in S_{\overline{D}}$ of moves for $D$ and
    $\overline{D}=N\setminus D$ respectively, we write
    $(s_D,s_{\overline{D}})\in s_N$ for the joint move of all agents
    induced in the evident way. In this notation, we interpret the
    modalities $[D]$ over~$\Gm$ by the predicate lifting
    \begin{multline*}
      \Sem{[D]}_X(A)=\{(k_1,\dots,k_n,f)\in\Gm X\mid\\
      \exists s_D\in S_D.\,\forall s_{\overline{D}}\in S_{\overline{D}}.\,f(s_D,s_{\overline{ D}})\in A\},
    \end{multline*}
    and the modalities $\langle D\rangle$ by dualization. This
    captures exactly the semantics of the AMC: $\Gm$-coalgebras are
    precisely \emph{concurrent game structures}~\cite{AlurEA02}, i.e.\
    assign a one-step concurrent game to each state, and $[D]\phi$
    says that the agents in~$D$ have a joint move such that however
    the agents in~$\overline{D}$ move, the next state will
    satisfy~$\phi$. E.g.\ $\nu Y.\,\mu X.\,(p\land [D] Y)\lor[D] X$
    says that coalition~$D$ can enforce that~$p$ is
    satisfied infinitely often.
  \end{enumerate}
\end{expl}

\noindent We now fix a target formula $\chi$ that does not contain
free fixpoint variables, assuming w.l.o.g.\ that~$\chi$ is
\emph{clean}, i.e.\ that every fixpoint variable is bound by at most
one fixpoint operator in~$\chi$.  For a variable $x\in \mathsf{Var}$
that is bound in $\chi$, we then write $\theta(x)$ to denote
\emph{the} formula $\eta X.\psi$ that is a subformula of $\chi$.  Let
$\mathsf{Cl}(\chi)$ be the \emph{closure} (that is, the set of
subformulae) of $\chi$. We have $|\mathsf{Cl}(\chi)|\leq |\chi|$,
where $|\chi|$ denotes the number of operators or variables in $\chi$.

We proceed to recall how model checking in the coalgebraic
$\mu$-calculus is reduced to computing a nested fixpoint of a
particular function~\cite{HausmannSchroder19b}:

\begin{definition}[Coalgebraic model checking function]\upshape
  Let $\xi:C\to TC$ be a coalgebra, and $U=\mathsf{Cl}(\chi)\times C$.
  The \emph{(coalgebraic) model checking function}
  $\alpha_{\mathsf{mc}}:\Pow(U)^{k+1}\to\Pow(U)$ is given by
  putting, for $\mathbf{U}=(U_1,\ldots,U_{k+1})\in\Pow(U)^{k+1}$,
\begin{align*}
\alpha_\mathsf{mc}(\mathbf{U})=&\{(\top,x)\mid (\top,x)\in U\}\cup\\
&\{(\hearts\psi,x)\in U\mid
\xi(x)\in\sem{\hearts}\{y\mid (\psi,y)\in U_1\}\}\cup\\
&\{(\psi\vee\phi,x)\in U\mid \{(\psi,x),(\phi,x)\}\cap U_1\neq\emptyset\}\cup\\ 
&\{(\psi\wedge\phi,x)\in U\mid \{(\psi,x),(\phi,x)\}\subseteq U_1\}\cup\\
&\{(\eta X.\,\psi,x)\in U\mid (\psi,x)\in 
U_1\}\cup\\
&\{(X,x)\mid (\theta(X),x)\in
U_{\mathsf{ad}(\theta(X))+1}\}.
\end{align*}
\end{definition}

\begin{lemma}[Coalgebraic model checking~\cite{HausmannSchroder19b}]
\label{lemm:modcheck}
  Let~$\chi$ be a formula of alternation depth~$k$, $\xi:C\to TC$ a
  coalgebra, and $x\in C$ a state.  Then we have
\begin{equation*}
(\chi,x)\in\mathsf{A}^{\alpha_\mathsf{mc}}\text{ if and only if }x\in\sem{\chi}.
\end{equation*}
\end{lemma}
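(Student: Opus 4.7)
The plan is to reduce the claim to the standard Knaster--Tarski/Bekič characterization of the modal $\mu$-calculus semantics, exploiting the fact that the canonical equation system over $\alpha_{\mathsf{mc}}$ is designed so that (i) each subformula $\psi\in\mathsf{Cl}(\chi)$ corresponds to a predicate $\{x\mid(\psi,x)\in \mathsf{E}^{\alpha_{\mathsf{mc}}}\}$, and (ii) for each bound variable $X$ with $\theta(X)=\eta X.\,\psi$, the priority at which $X$ is tracked in the canonical system encodes the alternation depth of $\eta X.\,\psi$ in $\chi$. Unfolding the canonical system, the equation for priority $0$ is the greatest fixpoint $X_0=_{\GFP}\alpha_{\mathsf{mc}}(X_0,\dots,X_k)$, while the higher-priority equations $X_i=_{\eta_i}X_{i-1}$ serve to isolate, one alternation depth at a time, the sub-predicates that are themselves treated by a least or greatest fixpoint.

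First I would unfold the definition of $\mathsf{E}^{\alpha_{\mathsf{mc}}}=\sem{X_k}$ by induction on the number $k$ of alternations and show, using Bekič's theorem, that the component of $\mathsf{E}^{\alpha_{\mathsf{mc}}}$ consisting of pairs $(\psi,x)$ for subformulae $\psi$ of $\chi$ that do \emph{not} have fixpoint variables bound outside $\psi$ coincides with $\sem{\psi}$. For non-fixpoint operators this follows by inspection of the clauses of $\alpha_{\mathsf{mc}}$: the clauses for $\top$, $\wedge$, $\vee$ directly match the Boolean semantics, and the clause for $\hearts\psi$ applies the predicate lifting $\sem{\hearts}$ to the $\psi$-component, which by naturality and coalgebra transition equals $\xi^{-1}[\sem{\hearts}\sem{\psi}]=\sem{\hearts\psi}$. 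For a fixpoint subformula $\eta X.\,\psi$ with $\mathsf{ad}(\eta X.\,\psi)=j$, the system effectively computes $\eta_j X_j.\,(\cdots)$ where the inner argument is the $\psi$-component of $\mathsf{E}^{\alpha_{\mathsf{mc}}}$; that this is the right fixpoint with respect to the semantics follows once the induction hypothesis supplies $\sem{\psi}^X$ as a function of the $X$-component.

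The second, formal step is a structural induction on subformulae $\psi$ of $\chi$, parametrized by a valuation $\sigma$ mapping the fixpoint variables bound outside $\psi$ (and in scope at $\psi$) to their fixpoint extensions as extracted from $\mathsf{E}^{\alpha_{\mathsf{mc}}}$. The clause for bound fixpoint variables is handled by the rule $\{(X,x)\mid (\theta(X),x)\in U_{\mathsf{ad}(\theta(X))+1}\}$ in $\alpha_{\mathsf{mc}}$: this clause tunnels the evaluation of $X$ through the equation at its alternation depth, which is exactly the point where Bekič's theorem identifies $\sem{X}_\sigma=\sigma(X)=\sem{\theta(X)}_{\sigma'}$ for the valuation $\sigma'$ outside $\theta(X)$. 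For the case $\psi=\eta X.\,\psi'$, the induction hypothesis applied to $\psi'$ together with the priority assignment yields $\{x\mid (\eta X.\,\psi',x)\in \mathsf{E}^{\alpha_{\mathsf{mc}}}\}=\eta X.\,\sem{\psi'}^X_\sigma=\sem{\eta X.\,\psi'}_\sigma$.

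The main obstacle I anticipate is precisely the bookkeeping around the indirection $(X,x)\mapsto(\theta(X),x)$ at priority $\mathsf{ad}(\theta(X))+1$: one has to verify that the chain $X_0\leq X_1\leq\dots\leq X_k$ of lattice components in the canonical system induces, in the correct nesting order, the alternating least/greatest fixpoints corresponding to the alternation-depth structure of $\chi$, and that free variables in subformulae (in the inductive argument) match the valuation $\sigma$ extracted from higher-priority components. Once this correspondence between the alternation-depth stratification in $\chi$ and the priority stratification in the canonical system is made precise, the equivalence follows uniformly.
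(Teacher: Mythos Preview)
The paper does not actually prove this lemma: it is stated as an imported result from~\cite{HausmannSchroder19b}, with no proof given in the body or the appendix. There is therefore nothing in the paper to compare your attempt against.

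That said, your outline is broadly the right shape for a direct proof. A Beki\v{c}-style unwinding of the canonical equation system together with a structural induction over~$\mathsf{Cl}(\chi)$ is the natural route, and you correctly identify the crux: matching the priority stratification $X_0,\dots,X_k$ of the canonical system to the alternation-depth stratification of~$\chi$, via the clause that sends $(X,x)$ to $(\theta(X),x)$ at level $\mathsf{ad}(\theta(X))+1$. Two points to tighten if you want a self-contained argument: first, the function $\alpha_{\mathsf{mc}}$ as defined in the paper indexes its arguments as $U_1,\dots,U_{k+1}$ rather than $U_0,\dots,U_k$, so your induction needs to track this off-by-one carefully when invoking the canonical system $\mathsf{E}^{f_0}$ (and note the paper writes $\mathsf{A}^{\alpha_{\mathsf{mc}}}$ in the lemma but $\mathsf{E}^{f_0}$ in the definition of the canonical system; these are meant to coincide). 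Second, your inductive invariant should be stated relative to a valuation $\sigma$ for \emph{all} variables free in the current subformula, not just those bound outside; the clause-by-clause verification then needs that the $U_i$ passed to $\alpha_{\mathsf{mc}}$ at each stage agree with the semantics of the enclosing fixpoints, which is exactly what Beki\v{c} delivers but which your sketch asserts rather than checks. None of this is a genuine gap so much as the place where the hand-waving (``once this correspondence \dots\ is made precise'') would have to be cashed out.
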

\noindent \noindent The \emph{one-step satisfaction problem} consists
in deciding whether $t\in\sem{\hearts} (W),$ for given $t\in T C$,
$\hearts\in\Lambda$ and $W\subseteq C$.  The time
$t(\alpha_\mathsf{mc})$ it takes to compute the model checking
function $\alpha_\mathsf{mc}$ hence depends on the time it takes to
solve the one-step satisfaction problem for the modal operators at
hand.  By Corollary~\ref{cor:qp}, we
obtain
\begin{corollary}
Model checking for coalgebraic $\mu$-calculus formulae
of alternation depth $k$ against coalgebras $\xi:C\to TC$
can be done in time
$t_2\cdot t_1$
where 
$t_2=\max(t(\alpha_\mathsf{mc}),t_1)$,
$t_1={2n^3(k+1)^2} {{\log (n(k+1))+k+2}\choose{k+1}}$, and
 $n=|\mathsf{Cl}(\chi)|\cdot|C|$.
\end{corollary}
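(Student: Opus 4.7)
The plan is a direct combination of Lemma~\ref{lemm:modcheck} with Corollary~\ref{cor:qp}. First, Lemma~\ref{lemm:modcheck} reduces deciding $x\in\sem{\chi}$ to testing membership in $\mathsf{A}^{\alpha_{\mathsf{mc}}}$, which is, by definition, the solution of the canonical fixpoint equation system over the monotone function $\alpha_{\mathsf{mc}}\colon\Pow(U)^{k+1}\to\Pow(U)$, where $U=\mathsf{Cl}(\chi)\times C$. This is a fixpoint equation system over the powerset lattice $\Pow(U)$, whose basis $\{\{u\}\mid u\in U\}$ has cardinality $n=|\mathsf{Cl}(\chi)|\cdot|C|$, and whose alternation depth is at most~$k$.

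Next I would apply the lifting algorithm of Section~\ref{sec:lifting} to this equation system. By the proof of Corollary~\ref{cor:qp}, the algorithm terminates after at most $2(n(k+1))^2\binom{\log(n(k+1))+k+2}{k+1}$ iterations of $\mathsf{Lift}$; each such iteration is implementable with at most $n(k+1)$ evaluations of $\alpha_{\mathsf{mc}}$; and each evaluation in turn takes time $t(\alpha_{\mathsf{mc}})$.

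Multiplying the iteration count by the per-iteration evaluation count and grouping the polynomial factors gives $t_1=2n^3(k+1)^2\binom{\log(n(k+1))+k+2}{k+1}$ as the total number of ``atomic'' operations performed by the algorithm, each charged with at most $t(\alpha_{\mathsf{mc}})$ time for the modal evaluation plus a comparable amount of bookkeeping (pointwise measure updates, comparisons against the simulation order, and navigation in the universal graph). Setting $t_2=\max(t(\alpha_{\mathsf{mc}}),t_1)$ uniformly absorbs this bookkeeping, which yields the stated bound $t_2\cdot t_1$.

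I do not expect any step to pose a real obstacle: Lemma~\ref{lemm:modcheck} supplies the reduction of model checking to a nested fixpoint, Corollary~\ref{cor:qp} supplies the quasipolynomial computation of that fixpoint, and all that remains is matching parameters between the general lattice setting ($B_L$, $d$) and the concrete powerset lattice $\Pow(U)$ (with $n=|U|$ and alternation depth~$k$), which is routine.
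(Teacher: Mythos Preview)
Your proposal is correct and matches the paper's approach: the paper offers no separate proof of this corollary, merely stating ``By Corollary~\ref{cor:qp}, we obtain'' the result, and your write-up is precisely the natural unpacking of that citation via Lemma~\ref{lemm:modcheck} and the iteration count from the proof of Corollary~\ref{cor:qp}. The only minor point to watch is the arithmetic when you multiply the iteration bound $2(n(k+1))^2\binom{\log(n(k+1))+k+2}{k+1}$ by the $n(k+1)$ evaluations per iteration, but this is a bookkeeping detail already present in the paper's stated $t_1$.
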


\begin{expl}
  [Quasipolynomial-time model checking for graded and probabilistic
  $\mu$-calculi] In~\cite[Examples 3.2 and 3.3]{HausmannSchroder19b},
  it was shown that in the graded and probabilistic cases, the
  one-step satisfaction problem can be solved in time
  $\mathcal{O}(\mathsf{size}(\chi)\cdot|C|)$ and
  $\mathcal{O}(\mathsf{size}(C)^2\cdot|C|^3)$, respectively; here,
  $\size(\chi)$ denotes the representation size of the formula $\chi$
  and $\size(C)$ denotes the representation size of the coalgebra $C$.
  We hence obtain the following quasipolynomial upper time bounds for
  the model checking problems of the respective $\mu$-calculi, both
  with numbers coded in binary (where $t_1={2n^3(k+1)^2} {{\log (n(k+1))+k+2}\choose{k+1}}$ and $n=|\mathsf{Cl}(\chi)|\cdot|C|$):
  \begin{itemize}[itemsep=0em,parsep=0em,topsep=0.2em,partopsep=0em]
  \item for the graded $\mu$-calculus:
  $\mathcal{O}(t_1\cdot t_2)$, where $t_2=\max(\mathsf{size(\chi)}\cdot|C|,t_1)$;
  \item for the probabilistic $\mu$-calculus:
    $\mathcal{O}(t_1\cdot t_2)$, where 
$t_2=\max(\mathsf{size}(C)^2\cdot|C|^3,t_1)$.
  \end{itemize}
  Similar bounds, with slightly larger~$t_2$, are obtained for the
  respective extensions with polynomial inequalities. To the best of
  our knowledge, these bounds are new. We similarly obtain
  quasipolynomial bounds for model checking the monotone
  $\mu$-calculus and the alternating-time $\mu$-calculus. In these
  cases, the \emph{time} bounds are already
  in~\cite{HausmannSchroder19b}, via an encoding into standard parity
  games; but we emphasize again that the point of our main result
  (Corollary~\ref{cor:qp}) is not so much the time bound but rather
  the quasipolynomial bound on the number of iterations -- in this
  case, we obtain that the fixpoint can be computed with
  quasipolynomially many calls to the one-step satisfaction problem
  (which at least for the alternating-time case seems also
  algorithmically preferable to an encoding in parity games with many
  additional states).
\end{expl}

\noindent We now consider satisfiability checking for the coalgebraic
$\mu$-calculus, which also reduces to the computation of a nested
fixpoints of a certain function~\cite{HausmannSchroder19a}. We recall
the essential notions that are required to define this function;
see~\cite{HausmannSchroder19a} for details of the construction.  We
fix a target formula $\chi$ of size $n$ and alternation depth $k$, to
be checked for satisfiability. One then has a deterministic parity
automaton that accepts
precisely the \emph{good branches} in tableaux representing
prospective models of~$\chi$, i.e.\ the ones not containing infinite
deferrals of least fixpoints (which represent eventualities). We work
with parity automata in which priorities are assigned to the
\emph{transitions} (rather than the states); our automaton thus has
the form $(\detcarrier,\Sigma,\delta,\beta)$ where $\detcarrier$ is
the set of states; $\Sigma$ is the alphabet (designed to allow
identifying manipulations of formulae happening in the
transitions);~$\delta$ is the transition function; and~$\beta$ assigns
priorities to transitions. Since the automaton is deterministic, we
can take~$\beta$ to be a function
$\detcarrier\times\Sigma\to\Nat$. Recall that such a parity automaton
accepts an infinite word~$w$ if and only if it has a run for~$w$ in which the
highest priority that occurs infinitely often is even. We have
$|\detcarrier|\in\mathcal{O}(2^{\mathcal{O}(nk\log n)})$, and nodes $v\in\detcarrier$
are labelled with sets $l(v)$ of formulae. We denote the set of nodes
whose labels contain some propositional formula by
$\mathsf{prestates}$ and the set of nodes whose labels contain only
modal formulae by $\mathsf{states}$; for $v\in\mathsf{prestates}$,
$\psi_v$ is a fixed propositional formula from the label of~$v$. The
transition function $\delta$ tracks sets of formulae according to the
logical manipulations described by a given letter from~$\Sigma$.
Besides letters identifying propositional transformations,~$\Sigma$
contains sets of modal formulae describing modal steps; we write
$\mathsf{selections}\subseteq\Sigma$ for the set of these letters.
\begin{definition}[Coalgebraic satisfiability checking function~\cite{HausmannSchroder19a}] For sets
  $U\subseteq \detcarrier$ and
  $\mathbf{U}=(U_1,\ldots, U_{\prios}) \in\Pow(U)^{\prios}$, we put
  \begin{align*}
    \alpha_{\mathsf{sat}}(\mathbf{U})=&\{v\in \mathsf{prestates}\mid \\
                                      &\qquad\exists b\in \{0,1\}.\,
                                        \delta(v,(\psi_v,b))\in U_{\detprio(v,(\psi_v,b))}\}\cup\\
                                      &\{v\in \mathsf{states}\mid T(\textstyle\bigcup_{1\leq i\leq \prios} U_i(v))\cap\sem{l(v)}_1\neq\emptyset\}
  \end{align*}
  where $\detprio(v,(\psi_v,b))$ abbreviates $\detprio(v,(\psi_v,b),\delta(v,(\psi_v,b)))$ and where
  \begin{align*}
    U_i(v)=\{l(u)\mid &\,u\in X_i,\exists\kappa\in\mathsf{selections}.\\
                      &
                        \delta(v,\kappa)=u, \detprio(v,\kappa,u)=i\}.
  \end{align*}
\end{definition}
\noindent The \emph{one-step satisfiability problem} is to decide
whether
\begin{equation*}
  T(\textstyle\bigcup_{1\leq i\leq \prios}
  U_i(v))\cap\sem{l(v)}_1\neq\emptyset
\end{equation*}
for given~$U$, $v$.  Hence checking whether some
$v\in\mathsf{prestates}$ is contained in $\alpha_{\mathsf{sat}}(\mathbf{U})$
for given $\mathbf{U}$ is an instance of the one-step satisfiability problem.

\begin{lemma}[Fixpoint characterization of satisfiability \cite{HausmannSchroder19a}]
  In the above notation,
  \begin{equation*}
    v_0\in\mathsf{A}^{\alpha_\mathsf{sat}}\text{ if and only if }\chi\text{ is satisfiable}.
  \end{equation*}
\end{lemma}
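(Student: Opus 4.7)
The plan is to reduce satisfiability of $\chi$ to the existence of a winning strategy in a coalgebraic \emph{satisfiability parity game} played over the state space $D_\target$ of the determinized tracking automaton, and then to identify that winning region with the canonical nested fixpoint $\mathsf{A}^{\alpha_\mathsf{sat}}$.

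First I would define the satisfiability game as follows. Its arena is $D_\target$; Eloise owns all nodes. At a prestate $v\in\mathsf{prestates}$, Eloise picks a truth value $b\in\{0,1\}$ for the tracked propositional formula $\psi_v$ and moves to $\delta(v,(\psi_v,b))$ with priority $\detprio(v,(\psi_v,b))$. At a state $v\in\mathsf{states}$, whose label $l(v)$ consists of modal formulae, Eloise must exhibit a one-step witness, i.e., an element $t\in T(\bigcup_i U_i(v))\cap\sem{l(v)}_1$ for a set $U\subseteq D_\target$ of successor nodes reached via selections $\kappa\in\mathsf{selections}$ with priorities $\detprio(v,\kappa,\delta(v,\kappa))$. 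A play is winning for Eloise iff the induced infinite sequence of priorities is even, i.e., is accepted by the parity condition of the tracking automaton.

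Next I would establish that $\chi$ is satisfiable iff Eloise wins $v_0$. For the \emph{completeness} direction, given a $T$-coalgebra $(C,\xi)$ with $x_0\models\chi$, I would construct a history-free Eloise-strategy by maintaining at each reached automaton node $v$ a state $x\in C$ such that $x$ satisfies every formula in $l(v)$: at prestates, Eloise chooses $b$ according to whether $x\models\psi_v$, and at states she picks the one-step witness $t=(T!)(\xi(x))$ (where $!$ is the obvious map into the successor set), using naturality of the predicate liftings in $\Lambda$ to show $t\in\sem{l(v)}_1$. Because the eventualities tracked by $\chi$'s subformulae are actually discharged in the model, every play adhering to this strategy is accepted. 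For the \emph{soundness} direction, from a history-free winning strategy $s$ I would build a coalgebra whose carrier consists of the automaton states visited under $s$, where the transition map at a state $v\in\mathsf{states}$ is the $T$-structure $s(v)$; one then verifies inductively over the structure of formulae that $v$ satisfies every formula in $l(v)$, using that Eloise's propositional choices at prestates descending from $v$ resolve the subformula tableau consistently and that the parity winning condition prevents infinite deferral of any least fixpoint subformula.

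Third, I would identify the winning region of Eloise with $\mathsf{A}^{\alpha_\mathsf{sat}}$. The definition of $\alpha_\mathsf{sat}$ mirrors exactly one round of the game: at prestates it tests existence of a move to some $U_i$ with priority $i$, and at states it expresses the one-step satisfiability check. Just as the canonical fixpoint $\mathsf{E}^{f_\exists}$ coincides with $\mathsf{win}_\exists$ for standard parity games (Example 3.2), the canonical equation system built from $\alpha_\mathsf{sat}$ with priorities indexed by $\detprio$ characterizes the winning region of the satisfiability game via Theorem~\ref{lem:fpgame}, applied to the associated fixpoint game. Combining both reductions yields $v_0\in\mathsf{A}^{\alpha_\mathsf{sat}}$ iff $\chi$ is satisfiable.

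The main obstacle will be the soundness direction, and specifically the construction of a bona fide coalgebraic model from a winning Eloise-strategy. The subtlety is that the $T$-structures picked at different visited automaton states must cohere globally into a single transition map $\xi$, and the verification that each tracked modal formula is satisfied requires a careful induction that interleaves the structure of $l(v)$ with the behaviour of $\delta$ on propositional letters. The parity winning condition plays the crucial role of ruling out branches along which some least fixpoint subformula is forever deferred, which is precisely what needs to be established to convert the strategy into a genuine model.
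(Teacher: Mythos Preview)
The paper does not actually prove this lemma: it is stated as an imported result, with the proof deferred entirely to the cited prior work~\cite{HausmannSchroder19a}. There is therefore no proof in the present paper to compare your proposal against. What you have sketched is a plausible reconstruction of the argument in~\cite{HausmannSchroder19a}---reducing satisfiability to a parity game over the states of the determinized tracking automaton, and then identifying Eloise's winning region with the canonical nested fixpoint---and indeed that is broadly the shape of the argument there.

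That said, since you asked for a review of the proposal on its own merits: the outline is reasonable, but two points deserve more care. First, in step~3 you appeal to Theorem~\ref{lem:fpgame} to equate the winning region with $\mathsf{A}^{\alpha_\mathsf{sat}}$, but that theorem concerns the fixpoint game associated to an equation system, which is a different (exponentially larger) game than the satisfiability game you define; the identification you need is rather the direct analogue of the equality $\mathsf{win}_\exists=\mathsf{E}^{f_\exists}$ for ordinary parity games, and that requires a separate (if routine) argument, not Theorem~\ref{lem:fpgame}. Second, in the soundness direction the carrier of the model is not quite ``the automaton states visited under~$s$'': one typically needs to take only the \emph{states} (in the sense of $\mathsf{states}$, i.e.\ fully saturated labels) as worlds, with the propositional prestates absorbed into the construction, and the one-step witness chosen at each such node must be lifted along the map from successor labels back into the carrier. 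You correctly flag this coherence issue as the main obstacle; it is handled in~\cite{HausmannSchroder19a} via a careful truth lemma, and the details are not trivial.
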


\begin{corollary}
If the one-step satisfiability problem of a coalgebraic
logic can be solved in time $2^{\mathcal{O}({nk\log n})}$, then
the satisfiability problem of the $\mu$-calculus over this
logic can be solved in time $2^{\mathcal{O}({nk\log n})}$ as well.
\end{corollary}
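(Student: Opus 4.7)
My plan is to combine the fixpoint characterization of satisfiability with our quasipolynomial fixpoint computation result, applied to the function $\alpha_\mathsf{sat}$ over the powerset lattice $\Pow(\detcarrier)$. The crucial numerical observation is that the nesting depth~$\prios=2nk$ is much smaller than the logarithm of the size of the lattice basis $|\detcarrier|\in 2^{\mathcal{O}(nk\log n)}$, so that the stronger bound of Corollary~\ref{cor:fewpriopoly} applies instead of only that of Corollary~\ref{cor:qp}.

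Concretely, I would first invoke the fixpoint characterization of satisfiability to reduce the satisfiability problem to deciding whether $v_0\in\mathsf{A}^{\alpha_\mathsf{sat}}$. Here $\alpha_\mathsf{sat}$ operates on the powerset lattice of $\detcarrier$, whose basis has size $|\detcarrier|\in 2^{\mathcal{O}(nk\log n)}$, and the associated equation system uses $\prios=2nk$ priorities. Since $2nk\le\log|\detcarrier|$ for all but trivially small~$n$, Corollary~\ref{cor:fewpriopoly} applies and guarantees that $\mathsf{A}^{\alpha_\mathsf{sat}}$ can be computed using only polynomially many (in $|\detcarrier|$ and~$\prios$) evaluations of $\alpha_\mathsf{sat}$, that is, at most $\bigl(|\detcarrier|\cdot\prios\bigr)^{\mathcal{O}(1)}\subseteq 2^{\mathcal{O}(nk\log n)}$ function calls.

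It remains to bound the cost of a single evaluation of $\alpha_\mathsf{sat}$. By its definition, computing $\alpha_\mathsf{sat}(\mathbf{U})$ amounts to going through all nodes $v\in\detcarrier$ and checking, for each prestate~$v$, a tractable propositional condition, and for each state~$v$, one instance of the one-step satisfiability problem. Under the assumption of the corollary, each such instance is decidable in time $2^{\mathcal{O}(nk\log n)}$, so the total cost of one evaluation is $|\detcarrier|\cdot 2^{\mathcal{O}(nk\log n)}\in 2^{\mathcal{O}(nk\log n)}$. Multiplying by the quasipolynomial number of function calls yields the overall time bound $2^{\mathcal{O}(nk\log n)}\cdot 2^{\mathcal{O}(nk\log n)}=2^{\mathcal{O}(nk\log n)}$, as required.

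The only delicate point — and the main place where care is needed — is verifying that one lands in the regime $d\le\log|B_L|$ of Corollary~\ref{cor:fewpriopoly}; if one naively applied the weaker Corollary~\ref{cor:qp} instead, one would obtain an extra factor of $\log\prios\in\mathcal{O}(\log n)$ in the exponent, degrading the bound to $2^{\mathcal{O}(nk\log^2 n)}$ and losing the claimed improvement. Everything else reduces to routine absorption of polynomial factors into the $\mathcal{O}(nk\log n)$ exponent.
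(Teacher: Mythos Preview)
Your proof is correct and follows essentially the same approach as the paper: reduce to computing $\mathsf{A}^{\alpha_\mathsf{sat}}$ via the fixpoint characterization, observe that $\prios<\log|\detcarrier|$ so that Corollary~\ref{cor:fewpriopoly} (rather than only Corollary~\ref{cor:qp}) applies, and absorb the resulting polynomial-in-$|\detcarrier|$ factors together with the assumed $2^{\mathcal{O}(nk\log n)}$ cost per evaluation into the target bound. Your additional remark explaining why Corollary~\ref{cor:qp} alone would not suffice is a useful elaboration not spelled out in the paper.
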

\begin{proof}
By the previous Lemma, it suffices to show that
$\mathsf{A}^{\alpha_\mathsf{sat}}$ can be computed in time
$2^{\mathcal{O}({nk\log n})}$. Since we have $\prios<\log{|\detcarrier|}$, $\mathsf{A}^{\alpha_\mathsf{sat}}$ can
-- by Corollary~\ref{cor:fewpriopoly} -- be computed in time 
$\mathcal{O}(t(\alpha_\mathsf{sat})\cdot (|\detcarrier|(\prios+1))^8)$, where
$t(\alpha_\mathsf{sat})$ denotes the maximum of $(|\detcarrier|(\prios+1))^8$ and the time it takes to compute $\alpha_\mathsf{sat}$;
by assumption, $\alpha_\mathsf{sat}$ can be computed in time
$2^{\mathcal{O}({nk\log n})}$ so that we have $t(\alpha_\mathsf{sat})\in 2^{\mathcal{O}({nk\log n})}$.
\qed
\end{proof}

\begin{expl} It has been shown (e.g. in \cite{HausmannSchroder19a})
  that the one-step satisfiability problems of all logics from
  Example~\ref{expl:logics} can be solved in time
  $2^{\mathcal{O}({nk\log n})}$. Hence we obtain an upper bound
  $2^{\mathcal{O}({nk\log n})}$ for the satisfiability problems of all
  these logics, in particular including the monotone $\mu$-calculus,
  the alternating-time $\mu$-calculus, the graded $\mu$-calculus and
  the (two-valued) probabilistic $\mu$-calculus, even when the latter
  two are extended with (monotone) polynomial inequalities. This
  improves on the best previous bounds in all cases.

\end{expl}

\end{document}
